\newtheorem{theorem}{Theorem}[section]
\newtheorem{definition}{Definition}[section]
\newtheorem{example}{Example}[section]
\newtheorem{lemma}{Lemma}[section]
\begin{document}

\title{Network modeling methods for precision medicine}

\author{

	Elio Nushi$^1$ \and 
	Victor-Bogdan Popescu$^2$ \and
	Jose Angel Sanchez Martin$^3$ \and
    Sergiu Ivanov$^4$ \and
    Eugen Czeizler$^{2,5}$ \and
    Ion Petre$^{5,6,*}$

}

\footnotetext[1]{
	Department of Computer Science, University of Helsinki, Helsinki, Finland
}

\footnotetext[2]{
	Department of Information Technologies, \r{A}bo Akademi University, Turku, Finland
}

\footnotetext[3]{
     Department of Computer Science, Technical University of Madrid, Madrid, Spain
}

\footnotetext[4]{
	IBISC Laboratory, Universit\'{e} Paris-Saclay, Universit\'{e} \'{E}vry, \'{E}vry, France
}

\footnotetext[5]{
	Department of Bioinformatics, National Institute of Research and Development for Biological Sciences, Bucharest, Romania
}

\footnotetext[6]{
    Department of Mathematics and Statistics, University of Turku, Finland
}

\renewcommand*{\thefootnote}{\fnsymbol{footnote}}

\footnotetext[1]{
    Address for correspondence: \texttt{ion.petre@utu.fi}
}

\renewcommand*{\thefootnote}{\arabic{footnote}}

\date{
	April 2021
}

\maketitle

\begin{abstract}

	We discuss in this survey several network modeling methods and their applicability to precision medicine. We review several network centrality methods (degree centrality, closeness centrality, eccentricity centrality,  be\-tween\-ness centrality, and eigenvector-based prestige) and two systems con\-tro\-la\-bi\-li\-ty methods (minimum dominating sets and network structural controllability). We demonstrate their applicability to precision medicine on three multiple myeloma patient disease networks. Each network consists of protein-protein interactions built around a specific patient's mutated genes, around the targets of the drugs used in the standard of care in multiple myeloma, and around multiple myeloma-specific essential genes. For each network we demonstrate how the network methods we discuss can be used to identify personalized, targeted drug combinations uniquely suited to that patient.

	\textbf{Keywords}: network medicine, computational modeling, precision me\-di\-cine, drug repurposing, centrality measures, graph theory, systems controllability, 

\end{abstract}

\section{Introduction}
\label{section-introduction}

Network medicine is a promising recent approach in which the goal is to analyze the dysregulation of a disease through its specific molecular interactions (\cite{Saqi:2016aa, Tian:2012aa}). The key analytic power of this approach is that knowledge about disease-drivers and specific pathway deregulations can be combined with mechanistic knowledge of drug mechanisms to identify optimal drug combinations, and do this dynamically throughout the evolution of the disease.

An exciting aspect of this approach is that it can, in principle, be applied in a personalized way, taking into account patient-specific aspects such as co-morbidities, previous treatments, and the patient's own molecular data (such as her mutations, gene expression anomalies, corrupted signaling pathways). Therefore, a disease is seen as part of a patient's own molecular and clinical context, through the cumulative effect of various deregulations and anomalies. Also, drug therapies are seen as external interventions aiming to compensate for the effects of these anomalies in the patient-specific disease network. The focus is on identifying tailored drug combinations uniquely suited to that patient's disease network, in the current step of her disease progression.

This survey introduces several network modeling methods and demonstrates their potential applicability in personalized medicine. We survey several network centrality measures, aiming to identify parts of the network that are unusual in the context of its topology; we discuss their definitions and the intuition of their significance. We also discuss two systems controllability methods: network controllability and maximum dominating sets. The aim of these methods is to identify efficient interventions to change the network's configuration, and in principle to change from a setup associated with disease to one associated with a healthy state.

We demonstrate how these network modeling methods can be used in precision medicine for identifying targeted, personalized drug combinations. Our case study is multiple myeloma, an incurable cancer of the blood. We analyze a dataset consisting of the genetic mutations of three different patients. For each of them we construct their own personalized protein-protein directed interaction networks, and we analyze them with some of the methods in this survey to extract personalized predictions of optimal drug combinations. We compare these results with the standard therapy lines in multiple myeloma.

We also include a brief discussion on the availability of software tools supporting this line of research.

The survey is written in a tutorial style to facilitate the adoption of these methods.

\section{Network modeling methods}
\label{section-methods}

We discuss in this section a number of network analysis methods: network centrality measures (including degree centralities, proximity centralities, path centralities and spectral centralities) and two systems controllability methods (network controllability and minimum dominating sets). We apply several of these methods to a medical case-study in  Section~\ref{section-applications}.

\subsection{Network centrality methods}
\label{subsection-methods-centrality}

Real-world networks often include a large number of nodes and connections, but the importance of the nodes is generally not the same. The simplest way to measure the importance of a given node is to compute its degree: the number of incident edges. However, in many cases, more sophisticated approaches are required to produce meaningful measures of importance. In the most general sense, a centrality measure can be defined in the following way.

\begin{definition}[Centrality]
    Let $G = (V, E)$ be a network. A \emph{centrality measure} is any function $f : V \to \mathbb R$.
\end{definition}

This definition imposes no constraints on $f$, but most centrality measures take into account the structural properties of the network $G$ --- node connectivity, edge weights, etc. Based on which structural properties they take into account, centrality measures can be grouped into the following categories:

\begin{itemize}
    \item \emph{degree centralities:} measures based on the degree of a node;
    \item \emph{proximity centralities:} measures based on how close a node is to the other nodes in the network;
    \item \emph{path centralities:} measures based on the role the node plays in paths that traverse it;
    \item \emph{spectral centralities:} measures related to the algebraic properties of the adjacency matrix of the network (in particular its eigenvectors and eigenvalues).
\end{itemize}

While the majority of centralities focus on individual nodes, one can define measures focused on other structures: edges, subsets of nodes, etc. Most of these measures are straightforward derivations from node-based centralities (e.g., edge and group betweenness in~\cite{Brandes2008}), and we will not discuss them here. Furthermore, we only consider unweighted networks, and we focus on structural measures which do not take into account any possible dynamical states.

In the rest of this subsection, we discuss some well-known and often used centrality measures in detail. In particular, we give the intuitive motivations, the formal definitions, and the highlighted properties of the network. Furthermore, we give references to algorithms for computing the centrality measures and briefly discuss their time complexities. Finally, we describe \emph{network centrality indices} --- network-wide scores measuring the centralization of a network as a whole, and allowing for comparisons between networks.

While we aim at a comprehensive overview of the state of the art, we do not always provide a fully detailed discussion of all subjects. For in-depth treatment, we refer the reader to~\cite{Junker2008,Brandes2005,Newman2010}.

In the following sections we will frequently refer to undirected and directed star-topology networks. The $k$-node undirected star-topology network is $U_k^\star = (V, E)$ (also known as the full bipartite graph $K_{1,k}$), where $V = \{1, \dots, k\}$ and $E = \{\{i, 1\} \mid 2 \leq i \leq k\}$. The $k$-node directed star-topology network is $G_k^\star = (V, E')$, where $V$ is the same as in $U_k^\star$ and $E' = \{(i, 1) \mid 2 \leq i \leq k\}$. Note that $E$ consist of unordered pairs of vertices, while $E'$ consist of ordered pairs. We will also write $U^\star$ and $G^\star$ to refer to the general notion of undirected and directed star-topology networks respectively.

\subsubsection{Running example}
\label{sec:rex}

We will use the directed network in Figure~\ref{fig:rex} to illustrate the presented centrality measures and the related concepts. This is a scale-free random network generated using the Python library \emph{NetworkX}~\cite{networkx-article,networkx-url} with the following line of code: \texttt{networkx.scale\_free\_graph(10, alpha=.7, beta=.2, gamma=.1, seed=3)}.

\begin{figure}[htb]
    \begin{center}
        \includegraphics[width=0.9\textwidth]{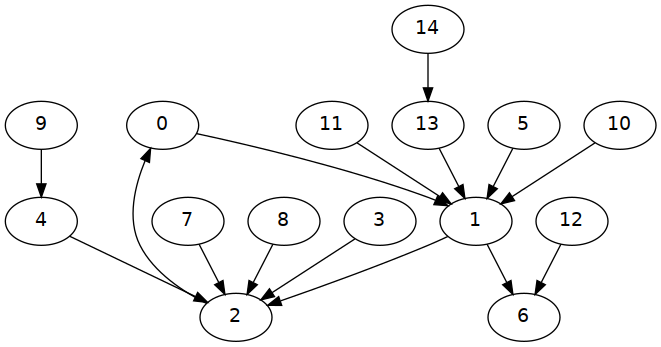}
    \end{center}
    \caption{The network serving as a running example to illustrate centralities and the related concepts.}
    \label{fig:rex}
\end{figure}

\subsubsection{Degree centralities}
\label{sec:degree-centralities}

Degree centrality was first introduced in~\cite{Shaw1954} to study the structure and behavior of groups of individuals in a society. It measures the importance of a node by directly counting the adjacent edges.

\begin{definition}[Degree centrality]
    \label{def:degree-centrality}

    Let $G = (V, E)$ be a network.

    \begin{itemize}
        \item If $G$ is undirected, then the \emph{degree centrality} is the function $\mathcal{C}_D : V \to \mathbb R$ assigning to every node its degree: $\mathcal{C}_D(v) = deg(v)$.
        \item If $G$ is directed, then the \emph{in-degree centrality} is the function $\mathcal{C}_D^- : V \to \mathbb R$, assigning to every node its in-degree: $\mathcal{C}_D^-(v) = deg^-(v)$. The \emph{out-degree centrality} $\mathcal{C}_D^+ : V \to \mathbb R$ assigns to every node its out-degree $\mathcal{C}_D^+(v) = deg^+(v)$. The (full) degree centrality is the function $\mathcal{C}_D(v) = \mathcal{C}_D^+(v) + \mathcal{C}_D^-(v)$.
    \end{itemize}

\end{definition}

It follows from these definitions that $\mathcal{C}_D(v)$ (or $\mathcal{C}_D^+(v)$ and $\mathcal{C}_D^-(v)$) is large when the node $v$ is adjacent to a high number of nodes. The extreme cases are $\mathcal{C}_D(v) = k-1$, in which $v$ is connected to all other nodes in a $k$-node network and $C(v) = 0$, when $v$ is isolated in the network.

The degree centrality is strongly related to the number of nodes in a network. As a trivial example, consider the degree centrality of the central node in the star-topology network $U_k^\star$: $\mathcal{C}_D^+(1) = k - 1$. The centrality of node $1$ is larger in larger networks with this topology, even though the intuitive idea of the importance of this node is essentially the same: it is connected to all other nodes of the network in both cases. Normalized degree centrality may be used to better capture the independence of the notion of centrality on the size of the network.

\begin{definition}[Normalized degree centrality]
    
    Let $G = (V, E)$ be a (either directed, or undirected) network. The \emph{normalized degree centrality} is the function $\widetilde{\mathcal{C}}_D : V \to \mathbb R$ defined as follows:

    \[
        \widetilde{\mathcal{C}}_D(v) = \frac{\mathcal{C}_D(v)}{|V| - 1}.
    \]

\end{definition}

Normalized degree centrality $\widetilde{\mathcal{C}}_D(v)$ therefore gives the ratio of the nodes adjacent to $v$, and it follows from the definition that $0 \leq \widetilde{\mathcal{C}}_D(v) \leq 1$. $\widetilde{\mathcal{C}}_D(v)$ can be thought of as the probability of $v$ to be connected to another node $w$ picked at random, an intuition that can be useful when generating random networks with a fixed degree distribution.

\begin{example}
  
    The following table gives the degree, in-degree, and out-degree centralities respectively for the example network in Figure~\ref{fig:rex}.

    \begin{center}
        \begin{tabular}{r|ccccccccccccccc}
            $v$ & 0 & 1 & 2 & 3 & 4 & 5 & 6 & 7 & 8 & 9 & 10 & 11 & 12 & 13 & 14 \\
            \hline
            $\mathcal{C}_D(v)$ & 2 & 7 & 6 & 1 & 2 & 1 & 2 & 1 & 1 & 1 & 1 & 1 & 1 & 2 & 1 \\
            $\mathcal{C}_D^-(v)$ & 1 & 5 & 5 & 0 & 1 & 0 & 2 & 0 & 0 & 0 & 0 & 0 & 0 & 1 & 0 \\
            $\mathcal{C}_D^+(v)$ & 1 & 2 & 1 & 1 & 1 & 1 & 0 & 1 & 1 & 1 & 1 & 1 & 1 & 1 & 1 \\
        \end{tabular}
    \end{center}

\end{example}

The complexity of computing the degree centrality of every individual node of a network depends linearly on the number of nodes and edges: $O(|V| + |E|)$ (e.g., \cite{Das2018}). Computing the centrality of any given node may be of the complexity $O(|V|)$ or $O(|E|)$, depending on the data structure used to represent the connections.

The degree centrality is a useful tool for identifying ``targets'' in a given network, but also for deciding which nodes may be discarded without impacting the quality of the model. This is common in network biology, see~\cite{Hahn2004}, \cite{Koschutzki2008}. However, degree centrality is a strongly local measure, mostly focusing on individual nodes and their immediate neighborhoods. In practice, this means that many nodes may often have close degree centralities, requiring finer measures to discern relevant features (e.g.~\cite{Kang2011}).

\subsubsection{Proximity centralities}
\label{sec:proximity-centralities}

In this section we discuss closeness, harmonic, and eccentricity centralities.

The degree centrality measure is a very straightforward approach to evaluate the importance of a node in a network. However, the nodes which are connected by a small number of edges to many others in a network are also important: their ``influence'' can reach many other nodes quickly. Counting the neighbors of a node clearly does not suffice to asses this kind of closeness to other nodes. The important aspect is rather having a small average distance to the other nodes in the network. This leads to the following definition of the closeness centrality measure as the reciprocal of farness~\cite{Bavelas1950,Leavitt1949,Sabidussi1966}.

\begin{definition}[Closeness centrality]

    Let $G = (V, E)$ be an undirected network. The \emph{closeness centrality} is the function $\mathcal{C}_C : V \to \mathbb R$ defined as follows:

    \[
        \mathcal{C}_C(v) = \frac{1}{\displaystyle \sum_{u \in V} d(u, v)}.
    \]

\end{definition}

As in the case of degree centralities, the size of the network has an impact on the closeness centrality of its nodes: the larger the network, the more paths it contains, and the lower closeness centralities tend to become. To make closeness centralities more uniform, \cite{Beauchamp1965} proposed the normalized version of this measure.

\begin{definition}[Normalized closeness centrality]
    Let $G = (V, E)$ be a connected undirected network. The \emph{normalized closeness centrality} is the function $\widetilde{\mathcal{C}}_C : V \to \mathbb R$ is defined as follows:

    \[
        \widetilde{\mathcal{C}}_C(v) = \frac{|V| - 1}{\displaystyle \sum_{u \in V} d(u, v)}.
    \]

\end{definition}

Normalized closeness centrality can be thought of as the inverse of the mean of the distances to $v$ from all other nodes. The bounds on the values of the normalized closeness centrality $\widetilde{\mathcal{C}}_C$ are the same as the bounds on the values of $\mathcal{C}_C$: $0 < \widetilde{\mathcal{C}}_C(v) \leq 1$, but $\widetilde{\mathcal{C}}_C(v)$ reaches its maximal value 1 for any node $v$ adjacent to all the other nodes of a given network.

A major drawback of closeness centrality is that it does not yield meaningful values on disconnected networks. Indeed, if no path connects nodes $u$ and $v$, by definition $d(u, v) = \infty$, and therefore a single isolated node would make the sums of distances in the definition of closeness infinite, and the closeness centralities themselves will all become 0. There are several different ways in which this can be addressed:

\begin{itemize}
    \item Restrict the notion of closeness centrality to strongly connected graphs (or strongly connected components of arbitrary graphs). This avoids the problem is having pairs of nodes $(u,v)$ whose distance is infinite, on the grounds of $v$ being unreachable from $u$.
    \item Restrict the sum of distances in the definition of closeness centrality to pairs of reachable nodes, addressing the same issue of infinite distances.
    \item Replace any infinite distances with a large enough constant, as proposed in \cite{Rochat2009} and~\cite{Csardi2006}.
\end{itemize}

Closeness centrality can also be defined for directed networks. To do this, we can consider in the definition either the distances $d(u,v)$ from all ancestors of $v$ to $v$, or distances $d(v,u)$ from $v$ to all its descendants. This is important when using the closeness centrality measure as a proxy for the notion of either a node that is reachable (and modifiable) from many directions, or that of a node that is influential in being able to reach many other nodes. One example of such a definition is the \emph{Lin index} \cite{Lin1976}. The software package \emph{NetworkX} computes the distances from the nodes which reach $v$~\cite{networkx-closeness}, and normalizes with respect to the number of these nodes. The difficulties with infinite distances for pairs of unreachable nodes persist also in the directed case, with possible solutions similar to those for the undirected case.

\begin{example}

    The following table gives the closeness centralities and the normalized closeness centralities, rounded to two digits after the decimal point, for the nodes of the example network in Figure~\ref{fig:rex}. The lines $C_C^u(v)$ and $\widetilde{C}_C^u(v)$ take this network to be undirected, meaning that the directed edges appearing in the figure can be traversed both ways. The calculations for directed networks were done with \emph{NetworkX}, using the approach explained above.

    \begin{center}
        \begin{tabular}{r|ccccccccc}
            $v$ & 0 & 1 & 2 & 3 & 4 & 5 & 6 & 7 \\
            \hline
            $\mathcal{C}_C(v)$ & 0.03 & 0.04 & 0.05 & 0.00 & 1.00 & 0.00 & 0.03 & 0.00 \\
            $\widetilde{\mathcal{C}}_C(v)$ & 0.40 & 0.48 & 0.60 & 0.00 & 1.00 & 0.00 & 0.36 & 0.00 \\
            \hline
            $\mathcal{C}_C^u(v)$ & 0.03 & 0.05 & 0.04 & 0.03 & 0.03 & 0.03 & 0.03 & 0.03 \\
            $\widetilde{\mathcal{C}}_C^u(v)$ & 0.48 & 0.64 & 0.58 & 0.38 & 0.40 & 0.40 & 0.42 & 0.38 \\
        \end{tabular}
    \end{center}

    \begin{center}
        \begin{tabular}{r|cccccccc}
            $v$ & 8 & 9 & 10 & 11 & 12 & 13 & 14 \\
            \hline
            $\mathcal{C}_C(v)$ & 0.00 & 0.00 & 0.00 & 0.00 & 0.00 & 1.00 & 0.00 \\
            $\widetilde{\mathcal{C}}_C(v)$ & 0.00 & 0.00 & 0.00 & 0.00 & 0.00 & 1.00 & 0.00 \\
            \hline
            $\mathcal{C}_C^u(v)$ & 0.03 & 0.02 & 0.03 & 0.03 & 0.02 & 0.03 & 0.02 \\
            $\widetilde{\mathcal{C}}_C^u(v)$ & 0.38 & 0.29 & 0.40 & 0.40 & 0.30 & 0.42 & 0.30 \\
        \end{tabular}
    \end{center}

\end{example}

Computing the closeness centrality of a node $v$ requires finding the shortest paths to $v$ from all other nodes of the network. Since constructing all shortest path to one particular node $v$ is of complexity $O(|E|)$, using e.g. a breadth-first search, computing the closeness centrality for all the nodes within a network is of complexity $O(|V| \cdot |E|)$. This means that computing the exact value of closeness centrality is impractical for many biological networks, which often contain thousands of nodes and tens of thousands of connections. It turns out that, in practice, one often only needs the first $k$ nodes with the highest closeness centrality, without requiring the actual centrality values. Such rankings can be computed in reasonable time even for very large networks, see, e.g.,~\cite{Bergamini2019}.

Due to its non-locality, closeness centrality is a finer tool for structural network analysis than degree centrality. For example, closeness fares better in identifying influential groups of nodes which may not individually have  high degree centrality. Closeness centrality has been show to perform particularly well in
biological network analysis. For example, \cite{Ma2003} shows that a slightly modified closeness measure allows for associating 8 of the top 10 metabolites of the metabolic network of \emph{E.~coli} with the glycolysis and citric acid cycle pathways.

A modification of closeness centrality, addressing the difficulty of infinite distances, consists in swapping the summation out of the denominator, effectively transforming what is an inverse arithmetic mean in normalized closeness centrality into inverse \emph{harmonic} mean. This new centrality measure was first discussed in~\cite{Marchiori2000}, then independently introduced in~\cite{Dekker2005} under the name ``valued centrality'', and finally gained its current name of harmonic centrality in~\cite{Rochat2009}.

\begin{definition}[Harmonic centrality]

    Let $G = (V, E)$ be a (either directed, or undirected) network. The \emph{harmonic centrality} is the function $\mathcal{C}_H : V \to \mathbb R$ defined as follows:
    
    \[
        \mathcal{C}_H(v) = \sum_{u \in V \setminus \{v\}} \frac{1}{d(u,v)}.
    \]

\end{definition}

Note that all nodes $u$ which are not connected to $v$ do not contribute to $\mathcal{C}_H(v)$, because $d(u,v) = \infty$, meaning that $1/d(u,v) = 0$.

As with closeness centrality, the same definition of harmonic centrality can be used for directed networks, in which case the order of nodes in the denominator $d(u,v)$ becomes important. If $v$ has in-degree 0, $C_H(v) = 0$ by direct computation of the formula in the definition.

To avoid an increase in harmonic centrality only due to the increase in the size of the network, one defines the normalized version.

\begin{definition}[Normalized harmonic centrality]

    Let $G = (V, E)$ be a (either directed, or undirected) network. The \emph{normalized harmonic centrality} is the function $\widetilde{\mathcal{C}}_H : V \to \mathbb R$ defined as follows:

    \[
        \widetilde{\mathcal{C}}_H(v) = \frac{1}{|V|-1}\sum_{u \in V \setminus \{v\}} \frac{1}{d(u,v)}.
    \]

\end{definition}

It follows that $0 \leq \widetilde{\mathcal{C}}_H(v) \leq 1$ (and $0 \leq \mathcal{C}_H(v) \leq |V|-1)$ both in directed and undirected networks. $\widetilde{\mathcal{C}}_H(v) = 0$ for isolated vertices, while $\widetilde{\mathcal{C}}_H(v) = 1$ ($\mathcal{C}_H(v) = |V| - 1$) for the center of a star network, in both directed and undirected cases, because both in $U_k^\star$ and $G_k^\star$ every node is connected to the central node 1.

Note that central nodes in large connected components will have greater values of harmonic centrality than central nodes in small connected components. Furthermore, nodes in disconnected networks will tend to have lower harmonic centralities than nodes in connected networks.

\begin{example}

    The following table gives the harmonic centralities and the normalized harmonic centralities, rounded to two digits after the decimal point, for the nodes of the example network in Figure~\ref{fig:rex}. The lines $\mathcal{C}_H^u(v)$ and $\widetilde{\mathcal{C}}_H^u(v)$ take this network to be undirected, meaning that the directed edges appearing in the figure can be traversed both ways.

    \begin{center}
        \begin{tabular}{r|ccccccccc}
            $v$ & 0 & 1 & 2 & 3 & 4 & 5 & 6 & 7 \\
            \hline
            $\mathcal{C}_H(v)$ & 5.42 & 7.58 & 8.33 & 0.00 & 1.00 & 0.00 & 6.37 & 0.00 \\
            $\widetilde{C}_H(v)$ & 0.39 & 0.54 & 0.60 & 0.00 & 0.07 & 0.00 & 0.45 & 0.00 \\
            \hline
            $\mathcal{C}_H^u(v)$ & 7.50 & 10.33 & 9.67 & 6.00 & 6.67 & 6.25 & 6.92 & 6.00 \\
            $\widetilde{\mathcal{C}}_H^u(v)$ & 0.54 & 0.74 & 0.69 & 0.43 & 0.48 & 0.45 & 0.49 & 0.43 \\
        \end{tabular}
    \end{center}

    \begin{center}
        \begin{tabular}{r|cccccccc}
            $v$ & 8 & 9 & 10 & 11 & 12 & 13 & 14 \\
            \hline
            $\mathcal{C}_H(v)$ & 0.00 & 0.00 & 0.00 & 0.00 & 0.00 & 1.00 & 0.00 \\
            $\widetilde{C}_H(v)$ & 0.00 & 0.00 & 0.00 & 0.00 & 0.00 & 0.07 & 0.00 \\
            \hline
            $\mathcal{C}_H^u(v)$ & 6.00 & 4.82 & 6.25 & 6.25 & 4.95 & 6.92 & 4.95 \\
            $\widetilde{\mathcal{C}}_H^u(v)$ & 0.43 & 0.34 & 0.45 & 0.45 & 0.35 & 0.49 & 0.35 \\
        \end{tabular}
    \end{center}

\end{example}

The computational complexities related to the harmonic centrality are the same as those of the closeness centrality, because of the similarities in the definitions of the two measures. Since finding the shortest path is of complexity $O(|E|)$, computing the harmonic centrality in a given directed or undirected network $G = (V, E)$ is of complexity $O(|V| \cdot |E|)$. For very large networks, approximate calculation strategies can be used, or alternatively the direct computation of centrality can be replaced by finding the top $k$ nodes with the highest centrality value, similarly to~\cite{Bergamini2019}.

Like closeness centrality, harmonic centrality has great potential for analysis of biological networks, because it captures the intuition of the influence of a node decaying with the distance, while also naturally handling disconnected networks. Online resources for systems biology offer tools to compute harmonic centrality (e.g.,~\cite{Zhang2016}), and this centrality measure is used in analysing simulations of growth of biological networks~\cite{Paul2020}. We remark however that several papers use the term ``harmonic centrality'' to refer to a rather different centrality measure, e.g.~\cite{Ren2015,Mao2020}.

Another modification to closeness centrality we will briefly consider in this subsection was introduced in~\cite{Dangalchev2006}. This work goes beyond harmonic centrality and adds an exponential to the denominator:

\[
  \mathcal{D}(v) = \sum_{u \in V \setminus \{v\}} \frac{1}{2^{d(u,v)}}.
\]

Like harmonic centrality, $\mathcal{D}$ treats disconnected networks naturally. In addition, $\mathcal{D}$ interacts conveniently with various operations on graphs, in particular with different kinds of graph union~\cite[Section~2]{Dangalchev2006}. Finally, this centrality measure can be generalized to the following form~\cite{Dangalchev2011}:

\[
  \mathcal{D}'(v) = \sum_{u \in V \setminus \{v\}} \alpha^{d(u,v)},
\]

where $\alpha\in (0,1)$. Clearly, for $\alpha = \frac{1}{2}$, $\mathcal{D}'(v) = D_C(v)$, and as $\alpha$ increases between 0 and 1, $\mathcal{D}'$ moves from local (mostly immediate neighbours count) to global (even long-distance connections count).

Even though the centrality measures we surveyed so far in this subsection are all based on the notion of closeness, one should underline that they are not true extensions of the closeness centrality. Indeed, the paper~\cite{Yang2011} shows that, even on a 7-node tree, these centrality measures yield close, but different values.

The last centrality measure based on the length of paths to a given node that we discuss is the eccentricity centrality. It formalizes the intuition that important nodes are those from which any other node is quickly reachable. It was originally introduced in~\cite{Hage1995}, but we give here the mathematical definition from~\cite{Junker2008}.

\begin{definition}[Eccentricity centrality]

    Let $G = (V, E)$ be a strongly connected (either directed, or undirected) network. The \emph{eccentricity centrality} is the function $\mathcal{C}_e : V \to \mathbb R$ defined as follows:

    \[
        \mathcal{C}_e(v) = \frac{1}{\max_{u\in V} d(v, u)}.
    \]

\end{definition}

The value in the denominator of $\mathcal{C}_e$ is the longest \emph{shortest path} in $G$ starting at $v$ and is usually referred to as the \emph{eccentricity} of $v$.

The same definition of eccentricity centrality can be used for undirected and directed networks. In the latter case, the order of $v$ and $u$ in $d$ becomes important, and the directed network is often required to be strongly connected~\cite{Junker2008}. In networks which are not strongly connected, the shortest is computed only to nodes reachable from $v$. Moreover, if the out-degree of $v$ is 0, then by definition $\mathcal{C}_e(v)=0$.

The bounds on eccentricity centrality are the same as for normalized closeness centrality: $0 \leq \mathcal{C}_e(v) \leq 1$. $\mathcal{C}_e$ reaches its maximal value for every node which is directly connected to other nodes, as is the case of the center of a star-topology network or any node of a complete network.

Unlike closeness or harmonic centralities, eccentricity centrality does not directly depend on the size of the network, which is why normalization is not generally considered for this measure.

\begin{example}

    The following table gives the eccentricity centralities for the nodes of the example network in Figure~\ref{fig:rex}. Since the example network is not strongly connected, the centralities $\mathcal{C}_e(v)$ are computed as inverses of the lengths of the longest shortest paths to reachable nodes. The line labeled with $\mathcal{C}_e^u(v)$ takes this network to be undirected, meaning that the directed edges appearing in the figure can be traversed both ways.

    \begin{center}
        \begin{tabular}{r|ccccccccccccccc}
            $v$ & 0 & 1 & 2 & 3 & 4 & 5 & 6 & 7 & 8 & 9 & 10 & 11 & 12 & 13 & 14 \\
            \hline
            $\mathcal{C}_e(v)$ & $\frac{1}{2}$ & $\frac{1}{2}$ & $\frac{1}{3}$ & $\frac{1}{4}$ & $\frac{1}{4}$ & $\frac{1}{3}$ & $0$ & $\frac{1}{4}$ & $\frac{1}{4}$ & $\frac{1}{5}$ & $\frac{1}{3}$ & $\frac{1}{3}$ & $\frac{1}{1}$ & $\frac{1}{3}$ & $\frac{1}{4}$ \\
            $\mathcal{C}_e^u(v)$ & $\frac{1}{3}$ & $\frac{1}{3}$ & $\frac{1}{3}$ & $\frac{1}{4}$ & $\frac{1}{4}$ & $\frac{1}{4}$ & $\frac{1}{4}$ & $\frac{1}{4}$ & $\frac{1}{4}$ & $\frac{1}{5}$ & $\frac{1}{4}$ & $\frac{1}{4}$ & $\frac{1}{5}$ & $\frac{1}{4}$ & $\frac{1}{5}$ \\
        \end{tabular}
    \end{center}

\end{example}

Similarly to closeness centrality, computing the eccentricity centrality of a node $v$ requires finding the shortest paths to all other nodes in the network, meaning that computing $\mathcal{C}_e(v)$ is of time complexity $O(|V| \cdot |E|)$.

Like closeness centrality, eccentricity centrality is able to capture well the notion of importance of a node as a function of its connections to the other nodes. For example, \cite{Wuchty2003} uses several centrality measures to analyse the networks of \emph{E.~coli} and \emph{S.~cerevisiae}, and shows that both closeness and eccentricity centralities produce very similar rankings of the top metabolites. On the other hand, eccentricity centrality was not able to distinguish essential from non-essential proteins in the PPI network of \emph{S.~cerevisiae}.

\subsubsection{Path centrality: betweenness}

Betweenness centrality measures the importance of a node by counting in how many connections between other nodes it is implicated. For example, the central node 1 of $U_k^\star$ is involved in all shortest paths between all other nodes. The idea of betweenness, i.e. being situated between other nodes, was introduced in the discussion of point centrality in~\cite{Bavelas1948}, and the first formal definition was given in~\cite{Freeman1977}.

Given a (either directed, or undirected) network, we denote the set of all shortest paths (also referred to as \emph{geodesics}) between nodes $u$ and $w$ by $\rho_{uw}$, and by $\rho_{uw}(v)$ the set of those shortest paths from $\rho_{uw}$ which pass through $v$. We further denote $g_{uw} = |\rho_{uw}|$ and $g_{uw}(v) = |\rho_{uw}(v)|$. Finally, we use the following notation:

\[
    p_{uw}(v) =
    \begin{cases}
        \displaystyle \frac{g_{uw}(v)}{g_{uw}}, & \rho_{uw} \neq \emptyset, \\
        0, & \text{otherwise}.
    \end{cases}
\]

$p_{uw}(v)$ can be seen as the probability of finding $v$ in a shortest path between $u$ and $w$ chosen at random from $g_{uw}$, in the case in which $w$ is reachable from~$u$~\cite{Freeman1977,Freeman1978}.

\begin{definition}[Betweenness centrality]

    Let $G = (V, E)$ be a (either directed, or undirected) network. The \emph{betweenness centrality} is the function $\mathcal{C}_B : V \to \mathbb R$ defined as follows:

    \[
        \mathcal{C}_B(v) = \sum_{\substack{u, w\in V\setminus\{v\} \\ u \neq w}} p_{uw}(v).
    \]

\end{definition}

Note that in this case too, there are subtle distinctions between directed and undirected networks. Thus, while in the case of undirected networks $\rho_{uw}$ (as well as $\rho_{uw}(v)$) is conceptually the same as $\rho_{wu}$ ($\rho_{wu}(v)$, resp.), and consequently the pair is not considered separately, this is not the case of directed networks. Hence, while the betweenness centrality function involves a $(k-1)(k-2)/2$ summation for undirected networks, i.e., the number of un-ordered pairs of nodes distinct from $v$,  in the case of directed networks it consists of a $(k-1)(k-2)$ summation.

Betweenness centrality reaches its minimal value 0 for isolated nodes and its maximal value for nodes $v$ situated on \emph{all} shortest paths between all other nodes of the network. For a $k$-node network with $k \geq 2$, this maximal value equals the number of pairs of nodes different from $v$: $\mathcal{C}_B(v) = (k-1)(k-2)/2$ for undirected networks, and $\mathcal{C}_B(v) = (k-1)(k-2)$ for the directed ones. In the case of undirected networks, this value will be reached for the central node of a $k$-node star-topology network $U_k^\star$. In the case of directed networks, this maximal value will be reached for the central node of the $k$-node star topology network, in which there are two symmetric edges between the central node and the non-central nodes: $\bar G_k^\star = (V, E)$, with $V = \{1, \dots, k\}$ and $E = \{(1, i), (i, 1) \mid 2 \leq i \leq k\}$.

In fact, not only $\mathcal{C}_B$ reaches its maximal value for the central node of a star-topology network, but the existence of a node for which $C_B$ reaches its maximal value is sufficient to guarantee that the network is a star.

\begin{lemma}\label{lem:max-betweenness-undir}

    Let $k \geq 2$ and let $U = (V, E)$ be a $k$-node undirected network ($|V| = k$) having a node $v \in V$ for which $C_B(v) = (k-1)(k-2)/2$. Then $U$ is isomorphic to $U_k^\star$.

\end{lemma}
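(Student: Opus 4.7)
The plan is to exploit the fact that each term $p_{uw}(v)$ in the betweenness sum is bounded above by $1$, and that the number of unordered pairs of nodes distinct from $v$ is exactly $(k-1)(k-2)/2$. So $\mathcal{C}_B(v)$ attaining its maximum forces $p_{uw}(v)=1$ for every such pair $\{u,w\}$. By the case split in the definition of $p_{uw}(v)$, this in turn means $\rho_{uw}\neq\emptyset$ (so $u$ and $w$ are connected in $U$) and every shortest $u$–$w$ path passes through $v$. In particular, $U$ is connected.

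Once this is established, the structural conclusion splits into two claims, each proved by contradiction using a single short path argument.

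First I would show that $v$ is adjacent to every other node. If some $u\neq v$ were not adjacent to $v$, pick a shortest $u$–$v$ path; it has length $\geq 2$, so it contains a vertex $x\neq v$ neighbouring $u$ along the path. Apply the previous paragraph to the pair $\{u,x\}$: the edge $\{u,x\}$ is a shortest path of length $1$, and since $v\notin\{u,x\}$, no path through $v$ can have length less than $2$. Hence $v$ lies on no shortest $u$–$x$ path, contradicting $p_{ux}(v)=1$.

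Second I would show that $U$ has no edges other than those incident to $v$. If $u,w\neq v$ were adjacent, then the edge $\{u,w\}$ is a shortest $u$–$w$ path of length $1$ not containing $v$, while any path through $v$ has length at least $2$; again this contradicts $p_{uw}(v)=1$. Combining the two claims, the edge set of $U$ is exactly $\{\{v,x\}\mid x\in V\setminus\{v\}\}$, giving an isomorphism to $U_k^\star$ that sends $v$ to the central node $1$. I do not anticipate a real obstacle; the only point requiring care is the boundary behaviour of $p_{uw}(v)$ when $\rho_{uw}=\emptyset$, which by definition yields $0$ and therefore cannot occur under our hypothesis, giving the connectivity of $U$ for free.
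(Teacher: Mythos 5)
Your proof is correct and follows essentially the same route as the paper's: the maximal value forces $p_{uw}(v)=1$ for every unordered pair of nodes distinct from $v$, from which one deduces first that $v$ is adjacent to all other nodes and second that no edge joins two nodes other than $v$. You in fact supply a detail the paper leaves implicit --- the neighbour-on-a-shortest-path argument showing why non-adjacency of some $u$ to $v$ yields a pair $\{u,x\}$ with $p_{ux}(v)=0$ --- so there is nothing to change.
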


\begin{proof}

    The fact that $C_B(v) = (k-1)(k-2)/2$ means that $v$ appears in \emph{all} shortest paths between all other nodes. This implies on the one hand that $v$ is connected to all other nodes, and on the other hand that there are no other edges in $U$. Indeed, if $v$ were not connected to some of the other nodes, then there would exist a pair of nodes $u, w \in V$ with the property $g_{uw}(v) = 0$, meaning that $C_B(v) < (k-1)(k-2)/2$. On the other hand, if there existed an edge between two vertices $u, w \in V$, then the only shortest path from $u$ to $w$ would not traverse~$v$, which again would mean that $g_{uw}(v) = 0$ and $C_B(v) < (k-1)(k-2)/2$.

\end{proof}

A similar result for directed networks imposes that any directed network $G$ having a node with maximal betweenness centrality should be isomorphic to $\bar G_k^\star$.

\begin{lemma}\label{lem:max-betweenness-dir}

    Let $k \geq 2$ and let $G = (V, E)$ be a $k$-node directed network ($|V| = k$) having a node $v \in V$ for which $C_B(v) = (k-1)(k-2)$. Then $G$ is isomorphic to $\bar G_k^\star$.

\end{lemma}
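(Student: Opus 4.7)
The plan is to adapt the argument of Lemma~\ref{lem:max-betweenness-undir} to the ordered-pair setting that is characteristic of directed networks. Since $p_{uw}(v) \in [0,1]$ for every pair and the sum defining $C_B(v)$ ranges over the $(k-1)(k-2)$ ordered pairs of distinct vertices in $V \setminus \{v\}$, the hypothesis $C_B(v) = (k-1)(k-2)$ forces $p_{uw}(v) = 1$ for every such pair. Unpacking the definition, this means that for every such $(u,w)$ the set $\rho_{uw}$ is nonempty (so $w$ is reachable from $u$) and every geodesic from $u$ to $w$ passes through $v$.

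First I would rule out any edge $(u,w) \in E$ with $u, w \in V \setminus \{v\}$: such an edge is itself a directed path of length one from $u$ to $w$, hence a geodesic that does not pass through $v$, contradicting $p_{uw}(v) = 1$. Consequently every edge of $G$ is incident to $v$, either as source or as target.

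Assuming $k \geq 3$, I would then show that both $(u,v)$ and $(v,u)$ lie in $E$ for every $u \neq v$. Fix some $w \in V \setminus \{u, v\}$. Since the shortest path from $u$ to $w$ exists and must traverse $v$, the vertex $u$ has at least one outgoing edge; by the previous step that edge can only terminate at $v$, so $(u,v) \in E$. Symmetrically, the shortest path from $w$ to $u$ traverses $v$, so $v$ reaches $u$ along a simple path whose intermediate vertices would all be different from $v$; the vertex immediately following $v$ on this path must therefore be $u$ itself, for otherwise the next edge would lie between two non-$v$ vertices, which has already been excluded. This gives $(v,u) \in E$.

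Collecting these observations, the edge set of $G$ is exactly $\{(v,u), (u,v) : u \in V \setminus \{v\}\}$, so relabelling $v$ as $1$ yields the required isomorphism with $\bar G_k^\star$. I expect the main technical care to be in handling the two orientations separately: the length-one geodesic argument must exclude edges among non-$v$ vertices in both directions, and the reachability hypothesis must be invoked on both $(u,w)$ and $(w,u)$ in order to obtain an edge from $u$ to $v$ as well as one from $v$ to $u$.
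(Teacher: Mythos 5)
Your proof is correct and follows the route the paper intends: it is the directed analogue of the argument given for Lemma~\ref{lem:max-betweenness-undir} (first excluding edges between non-$v$ vertices via length-one geodesics, then forcing both orientations of edges incident to $v$ from the reachability implied by $p_{uw}(v)=1$), which is all the paper offers here, since the directed case is stated without proof. Your explicit treatment of the two orientations is the right extra care; the only caveat is the degenerate case $k=2$, where $(k-1)(k-2)=0$ and the statement itself fails for the edgeless two-node network, so the restriction to $k\geq 3$ in your second step is genuinely needed and cannot be removed.
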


Note that in the case of the star-topology directed network $G_k^\star$ in which there are arcs going from all non-central nodes to the central node $1$ and no arcs going out of $1$, the betweenness centrality of all nodes will be 0, because all paths in this network have length 1.

To avoid the increase in betweenness centrality due only to the increase in the size of the network, one typically defines its normalized version. Unlike normalized closeness and harmonic centralities, in the case of betweenness one needs to normalize with respect to the number of pairs of nodes, rather than the number of nodes.

\begin{definition}[Normalized betweenness centrality]

    Let $G = (V, E)$ be a (either directed, or undirected) network. The \emph{normalized betweenness centrality} is the function $\widetilde{\mathcal{C}}_B : V \to \mathbb R$ defined as follows:

    \begin{align*}
        \mbox{a)} & \widetilde{\mathcal{C}}_B(v) = \frac{2}{(|V|-1)(|V|-2)} \sum_{\substack{u, w\in V\setminus\{v\} \\ u \neq w}} p_{uw}(v), & \mbox{ for undirected networks}, \\
        \mbox{b)} & \widetilde{\mathcal{C}}_B(v) = \frac{1}{(|V|-1)(|V|-2)} \sum_{\substack{u, w\in V\setminus\{v\} \\ u \neq w}} p_{uw}(v), & \mbox{ for directed networks}.
    \end{align*}

\end{definition}

Normalized betweenness centrality ranges between 0 for isolated vertices and 1 for central notes of star-topology networks.

\begin{example}

    The following table gives the betweenness centralities and the normalized betweenness centralities, rounded to two digits after the decimal point, for the nodes of the example network in Figure~\ref{fig:rex}. The lines $\mathcal{C}_B^u(v)$ and $\widetilde{\mathcal{C}}_B^u(v)$ take this network to be undirected, meaning that the directed edges appearing in the figure can be traversed both ways.

    \begin{center}
        \begin{tabular}{r|ccccccccc}
          $v$ & 0 & 1 & 2 & 3 & 4 & 5 & 6 & 7 \\
          \hline
          $\mathcal{C}_B(v)$ & 12.00 & 23.00 & 21.00 & 0.00 & 4.00 & 0.00 & 0.00 & 0.00 \\
          $\widetilde{\mathcal{C}}_B(v)$ & 0.07 & 0.13 & 0.12 & 0.00 & 0.02 & 0.00 & 0.00 & 0.00 \\
          \hline
          $\mathcal{C}_B^u(v)$ & 0.00 & 68.00 & 54.00 & 0.00 & 13.00 & 0.00 & 13.00 & 0.00 \\
          $\widetilde{\mathcal{C}}_B^u(v)$ & 0.00 & 0.75 & 0.59 & 0.00 & 0.14 & 0.00 & 0.14 & 0.00 \\
        \end{tabular}
    \end{center}

    \begin{center}
        \begin{tabular}{r|cccccccc}
          $v$ & 8 & 9 & 10 & 11 & 12 & 13 & 14 \\
          \hline
          $\mathcal{C}_B(v)$ & 0.00 & 0.00 & 0.00 & 0.00 & 0.00 & 4.00 & 0.00 \\
          $\widetilde{\mathcal{C}}_B(v)$ & 0.00 & 0.00 & 0.00 & 0.00 & 0.00 & 0.02 & 0.00 \\
          \hline
          $\mathcal{C}_B^u(v)$ & 0.00 & 0.00 & 0.00 & 0.00 & 0.00 & 13.00 & 0.00 \\
          $\widetilde{\mathcal{C}}_B^u(v)$ & 0.00 & 0.00 & 0.00 & 0.00 & 0.00 & 0.14 & 0.00 \\
        \end{tabular}
    \end{center}

\end{example}

Enumerating all shortest paths between all pairs of nodes different from a given node $v$ is rather expensive. The seminal work~\cite{Freeman1977} suggests computing powers of the adjacency matrix of the network to count the shortest paths, according to the methods detailed in~\cite{Harary1965}. However, \cite{Brandes2001} proposes a less expensive algorithm running in time $O(|V| \cdot |E|)$, and avoiding extra non-optimal paths which matrix multiplication would reveal. A more recent work~\cite{Nasre2014} proposes an even faster incremental algorithm, which updates the betweenness centralities of all nodes when a new edge is added. This algorithm runs in time $O(m |V|)$, where $m$ is the maximal length of the shortest path in the network. Finally, the paper~\cite{Borassi2019} proposes a random approximation algorithm which computes betweenness centralities in time $|E|^{1/2 + o(1)}$ with high probability, much faster than exact algorithms.

Betweenness centrality has been successfully applied in biological network analysis to identify significant nodes. For example, the work~\cite{Sahoo2016} used betweenness centrality to identify proteins significant in the context of cancer growth. In~\cite{Yu2007}, the authors  rely on betweenness centrality to identify bottlenecks in PPI networks. Moreover, the authors show that these nodes have a high tendency of representing  essential proteins. The work~\cite{Joy2005} shows that nodes with high betweenness centrality in yeast PPI networks are more likely to be essential and that the evolutionary age of proteins is positively correlated with their betweenness centrality.

We conclude this subsection by mentioning a well-known extension of the betweenness centrality: \emph{percolation centrality}. This measure was introduced in~\cite{Piraveenan2013} with the goal of taking into account the dynamical states of the nodes of the network. Intuitively, percolation centrality extends the definition of betweenness centrality by multiplying $p_{uw}(v)$ by \emph{percolation}: a factor derived from the degree to which the nodes in the paths from $u$ to $w$ are involved in the spread of some value (information, infection, etc.) through the network. Since percolation centrality considers network states, we do not discuss it in this section.

\subsubsection{Spectral centralities}
\label{sec:spectral-centralities}

In this section we discuss eigenvector-based centralities: the eigenvector centrality, Katz centrality, and PageRank centrality.

The idea of using eigenvectors for assessing the importance of nodes in networks was first introduced in~\cite{Bonacich1987} and further discussed in several fundamental works, e.g.~\cite{Ruhnau2000,Newman2008} with the goal of taking into account the fact that not all connections in a network are equal --- being connected to a highly central node has more impact than being connected to a more peripheral node. Before defining eigenvector centrality, we introduce some additional notations.

Given a (either directed, or undirected) network $G = (V, E)$, we denote its adjacency matrix by $\mathbf A$, in which $A_{uv} = 1$ if and only if $G$ contains an edge from node $u$ to node $v$. For a centrality measure $\mathcal{C} : V \to \mathbb R$, we will denote by $\mathbf c$ the vector obtained by computing $\mathcal{C}$ for every node of $G$, i.e. ${\mathbf c}_v = \mathcal{C}(v)$.

For a fixed node $v$, the idea that the contributions of its neighbors are proportional to their own centralities can be expressed as follows:

\[
    \mathcal{C}(v) = \frac{1}{\lambda} \sum_{u \in V} A_{uv} \mathcal{C}(u),
\]

where $\lambda > 0$ is a real constant. This relation can be rewritten in the matrix form:

\begin{equation}
    \label{eq:eigenvector}
    \lambda \mathbf{c} = \mathbf{A} \mathbf{c},
\end{equation}

which effectively defines $\mathbf c$ as an eigenvector, and $\lambda$ as an eigenvalue of the adjacency matrix $\mathbf A$ of the network $G$. Since $\mathbf A$ is a real square matrix (its elements are 0 and 1 in the case of unweighted networks), according to the Perron-Frobenius theorem of linear algebra, $\mathbf A$ has a unique largest real eigenvalue $\lambda$ and one can choose the corresponding eigenvector $\mathbf c$ to be strictly positive, which allows defining a meaningful centrality measure~\cite{Newman2008}.

\begin{definition}[Eigenvector centrality]
    \label{def:eigenvector-centrality}

    Let $G = (V, E)$ be a (either directed, or undirected) network and $\mathbf A$ its adjacency matrix. Let $\lambda$ be the largest real eigenvalue of $\mathbf A$ and $\mathbf c$ a corresponding eigenvector with strictly positive components. An \emph{eigenvector centrality} is a function $\mathcal{C}_E : V \to \mathbb R$ defined as follows: $\mathcal{C}_E(v) = \mathbf c_v$.

\end{definition}

Eigenvalue centrality is sometimes referred to as eigencentrality or eigenvector prestige. The same definition can be used for directed networks, in which case the adjacency matrix $\mathbf A$ may not be diagonally symmetric.

We will refer to eigenvectors $\mathbf c$ satisfying the constraints of the previous definition as \emph{principal} eigenvectors of $\mathbf A$ and of the network $G$. In fact, it follows from~(\ref{eq:eigenvector}) that any other vector $\alpha \mathbf c$, $\alpha > 0$, is also a principal eigenvector of $\mathbf A$. This means that the components of $\mathbf c$ (and therefore the values of $\mathcal{C}_E$) indicate relative centralities of the nodes of $G$ with respect to one another, rather than some absolute centrality score.

A standard way of fixing a preferred principal eigenvector $\mathbf c$ to avoid ambiguity is by normalizing $\mathbf c$. Several ways to normalize eigenvectors for centrality exist; we focus here on three norms studied in~\cite{Ruhnau2000} and we discuss implications of these normalizations for the eigenvector centrality index --- a network-wide measure of centralization. The results from~\cite{Ruhnau2000} also rely on the work of~\cite{Papendieck2000} which studies maximal entries in the principal eigenvector of a graph.

Normalizing an $n$-vector $\mathbf a$ typically consists in dividing its components by a $p$-norm, which is commonly defined as follows:

\[
    \| \mathbf a \|_p =
    \begin{cases}
        \displaystyle \left( \sum_{i=1}^n a_i^p \right)^\frac{1}{p}, & 1 \leq p < \infty, \\
        \max_{1\leq i \leq n}| a_i |, & p = \infty.
    \end{cases}
\]

One of the ways of normalizing the vector $\mathbf c$ from Definition~\ref{def:eigenvector-centrality} is by dividing all of its components by the $\infty$-norm, also known as the maximum norm.

\begin{definition}[$\infty$-norm eigenvector centrality]

    Let $G = (V, E)$ be a (either directed, or undirected) network and $\mathbf c$ be the principal eigenvector of $G$. The \emph{$\infty$-norm eigenvector centrality} is the function $\mathcal{C}_E^{(\infty)} : V \to \mathbb R$ defined as follows:

    \[
        \mathcal{C}_E^{(\infty)}(v) = \frac{\mathbf c_v}{\| \mathbf c \|_\infty} = \frac{\mathbf c_v}{\max_{u \in |V|} \mathbf c_u}.
    \]

\end{definition}

It follows directly from this definition that $0 \leq \mathcal{C}_E^{(\infty)}(v) \leq 1$, and that every connected network $G$ has a node $v^*$ for which $\mathcal{C}_E^{(\infty)}(v^*) = 1$.

Another way of normalizing the eigenvector centrality is by using the 1-norm, also known as the sum norm.

\begin{definition}[1-norm eigenvector centrality]

    Let $G = (V, E)$ be a (either directed, or undirected) network and $\mathbf c$ be the principal eigenvector of $G$. The \emph{1-norm eigenvector centrality} is the function $\mathcal{C}_E^{(1)} : V \to \mathbb R$ defined as follows:

    \[
        \mathcal{C}_E^{(1)}(v) = \frac{\mathbf c_v}{\| \mathbf c \|_1} = \frac{\mathbf c_v}{\sum_{u \in V} \mathbf c_u}.
    \]

\end{definition}

$\mathcal{C}_E^{(1)}(v)$ can be seen as the proportion of centrality that $v$ reaches in $G$. The bounds on the values of the 1-norm eigenvector centrality are as follows~\cite{Ruhnau2000}:

\[
    0 \leq \mathcal{C}_E^{(1)}(v) \leq \frac{1}{1 + (2 \cos \frac{\pi}{n+1})^{-1}}.
\]

This centrality measure cannot ever reach 1 for any network with 2 or more connected nodes. The upper bound is only reached for the nodes of a network which only contains one edge~\cite{Papendieck2000}. In the case of networks with more than 2 connected nodes, it is not known what is the actual maximal value $\mathcal{C}_E^{(1)}$ can achieve, nor in which kind of network topologies this value can be achieved. For example, the central node of a star-topology network \emph{does not} reach the maximal value.

One last way of normalizing the eigenvector centrality which we consider in this section is by using the 2-norm, also known as the Euclidean norm.

\begin{definition}[2-norm eigenvector centrality]

    Let $G = (V, E)$ be a (either directed, or undirected) network and $\mathbf c$ be the principal eigenvector of $G$. The \emph{2-norm eigenvector centrality} is the function $\mathcal{C}_E^{(2)} : V \to \mathbb R$ defined as follows:

    \[
        \mathcal{C}_E^{(2)}(v) = \frac{\mathbf c_v}{\| \mathbf c \|_2} = \frac{\mathbf c_v}{\sqrt{\sum_{u \in V}^{\phantom{a}} \mathbf c_u^2}}.
    \]

\end{definition}

The bounds on the 2-norm eigenvector centrality are $0 \leq \mathcal{C}_E^{(2)}(v) \leq \frac{1}{\sqrt{2}}$. The maximal value $\frac{1}{\sqrt{2}}$ is only reached by the center of the star topology network~\cite{Ruhnau2000}. One may define a derived measure ranging from 0 to~1 by multiplying $\mathcal{C}_E^{(2)}$ by $\sqrt{2}$.

Finding eigenvectors and eigenvalues of a given square matrix is closely related to finding the roots of polynomials. It follows from the Abel-Ruffini theorem that there exists  no algorithm  computing exactly the eigenvectors and eigenvalues for square matrices of size greater than 4 (e.g.,~\cite{Golub2000}). Therefore, iterative algorithms are usually used, one of the most popular being the power iteration method (used e.g., in \emph{NetworkX}~\cite{networkx-article}). The complexity of such iterative methods is generally between $O(|V|^2)$ and $O(|V|^3)$, while the convergence rate ranges from linear to cubic~\cite{Demmel1997}.

Eigenvalue centrality measures are a fine instrument for measuring the importance of nodes in a network, because they acknowledge the difference in impact between a connection to a high-centrality neighbor and a connection to a low-centrality one. With eigenvalue centrality, a node with a smaller number of ``high-quality'' connections may outrank a node with a larger number of ``low-quality'' connections~\cite{Newman2008}. The work~\cite{Estrada2006} applies closeness, betweenness, and eigenvector centralities to identifying essential proteins in PPI networks, and concludes that spectral centralities show the best performance. The paper~\cite{Negre2018} uses eigenvector centrality to pinpoint key amino acids in terms of their relevance in the allosteric regulation. The study~\cite{Melak2015} uses different centrality measures to identify potential drug targets of Mycobacterium tuberculosis, the etiological agent of tuberculosis (TB), and show that eigenvalue centrality fares best.

Another centrality measure based on the algebraic properties of the adjacency matrix is \emph{Katz centrality} (also referred to as \emph{Katz prestige} or \emph{Katz status index}). This method was proposed in~\cite{Katz1953} and it gives a centrality score by taking into consideration all the nodes of a given network. According to a reasoning similar to the one made for closeness centrality and its variants (Subsection~\ref{sec:proximity-centralities}), a node is of high importance if it is connected to many other nodes, but nodes situated farther away count less toward the total centrality score.

\begin{definition}[Katz centrality]

    Let $G = (V, E)$ be a (either  directed, or undirected)  loop-free network. Let $\mathbf A$ be its adjacency matrix and $\lambda$ be its largest positive eigenvalue. The \emph{Katz centrality} is the function $\mathcal{C}_K : V \to \mathbb R$ defined as follows:

    \[
        \mathcal{C}_K(v) = \sum_{i=1}^\infty \sum_{u\in V} \alpha^i (\mathbf A^i)_{uv},
    \]

    where $\alpha$ is a constant, $0 \leq \alpha \leq 1/\lambda$, and $(\mathbf A^i)_{uv}$ is the element in row $u$ and column $v$ of the $i$-th power of $\mathbf A$.

\end{definition}

Since $(\mathbf A^i)_{uv}$ is non-zero if and only if there exists a path of length exactly $i$ between $u$ and $v$, Katz centrality can be interpreted as a generalization of degree centrality~\cite{Junker2008}. Indeed, the role of the factor $\alpha^i$ is to scale down the contributions of longer paths, and without it $\mathcal{C}_K(v)$ essentially becomes the \emph{reachability index}: the number of nodes from which $v$ can be reached. If $\alpha$ is close to $0$, the contributions of longer paths are essentially discarded, and $\mathcal{C}_K$ approaches a form of degree centrality. On the other hand, as $\alpha$ approaches $1/\lambda$, $\mathcal{C}_K$ approaches eigenvector centrality~\cite{Newman2010}.

Let $\mathbf c_K$ be the vector collecting Katz centralities of a node in a given network: $(\mathbf c_K)_v = \mathcal{C}_K(v)$. Then $\mathbf c_K$ can be expressed in a more compact matrix multiplication form in the following way~\cite{Junker2008}:

\[
    \mathbf c_K = ((\mathbf I - \alpha A^T)^{-1} - \mathbf I) \mathbf 1,
\]

where $\mathbf I$ is the identity matrix of the same size as $\mathbf A$, $\mathbf A^T$ is the transpose of $\mathbf A$, $(\cdot)^{-1}$ denotes matrix inversion, and $\mathbf 1$ is a $|V|$-vector whose components are all~1.

Due to similarities with eigenvector centrality, exact Katz centrality can be computed by similar algorithms with similar running time complexities. In particular, the power iteration method can be applied (e.g.,~\cite{networkx-katz}). Faster approximate algorithms exist, in particular~\cite{Foster2001} presents an iterative algorithm with time complexity $O(|V| + |E|)$, given a constant desired precision.

Typical applications of Katz centrality concern directed networks, in particular directed acyclic networks, in which eigenvector centrality appears less useful~\cite{Newman2010}. Katz centrality has found promising applications in neuroscience. The work~\cite{Fletcher2018} shows that Katz centrality is the best predictor of firing rate given the network structure, with almost perfect correlation in all cases studied. The paper~\cite{Mantzaris2013} uses Katz centrality to analyse fMRI data of human brain activity during learning, and shows how key brain regions contributing to the process can be discovered.

A third eigenvalue and eigenvector-related centrality measure we briefly discuss in this section is PageRank, one of the algorithms used by Google to measure the importance of web pages~\cite{Google2011}. Multiple variants of PageRank have been proposed. We start with the definition from~\cite{Langville2005}.

\begin{definition}[PageRank]

    Let $G = (V, E)$ be a directed network and $\mathbf A$ its adjacency matrix. The \emph{PageRank} centrality is the function $\mathcal{C}_P : V \to \mathbb R$ satisfying the following equation:

    \[
        \mathcal{C}_P(v) = \sum_{u \in V} A_{uv} \frac{ \mathcal{C}_P(u)}{deg^+(u)},
    \]

    where $deg^+(u)$ is the out-degree of the node $u$.

\end{definition}

According to this definition, the centrality $\mathcal{C}_P(v)$ depends on the centralities of the nodes $u$ from which there is an edge going to $v$ ($A_{uv} \neq 0$). The contribution of each of these nodes $u$ to $\mathcal{C}_P(v)$ is inversely proportional to the number of edges going out of $u$. Thus, nodes connected to many other nodes contribute less to each of their neighbors than nodes which are connected to fewer nodes.

The complete definition of PageRank centrality (which was used in the first versions of the Google search engine) includes two additional parameters:

\[
    \mathcal{C}_P^{(0)}(v) = \alpha \sum_{u \in V} A_{uv} \frac{ \mathcal{C}_P(u)}{deg^+(u)} + \beta,
\]

where $\alpha$ can be seen as a decay factor, and $\beta$ as a vector of initial centralities (source ranks), scaled by $\alpha$. We take this definition from~\cite{FranceschetPageRank}, which gives a slightly simplified and generalized version of the original definition from~\cite{Page1999}.

The PageRank centrality is related to eigenvector centrality in the way the adjacency matrix of the network is exploited to define relative centrality scores. Since the formulae for computing PageRank are more complex, using iterative algorithms is often preferred, especially for large networks~\cite{Langville2005,Page1999}. Several specific and faster algorithms have been proposed for computing PageRank. In particular, the paper~\cite{DelCorso2005} proposes an algorithm consisting in reducing PageRank computation to computing solutions of linear systems, while the article~\cite{Bahmani2010} proposes Monte Carlo methods for incremental computation of PageRank.

PageRank is a relatively recent centrality measure, but it has already shown some promising performance in analysis of biological networks. In the work~\cite{Gabor2010}, the authors computed PageRank centralities for the metabolic network of the Mycobacterium tuberculosis and the PPI networks of melanoma patients, and in both cases important proteins received a high centrality score.

\begin{example}

    The following table gives eigenvector ($\mathcal{C}_E^X$), Katz ($\mathcal{C}_K^X$), and PageRank ($\mathcal{C}_P^X$) centralities for the nodes of the example network in Figure~\ref{fig:rex}, as computed with \emph{NetworkX}~\cite{networkx-url} and rounded to two digits after the decimal point.

    \begin{center}
        \begin{tabular}{r|ccccccccc}
            $v$ & 0 & 1 & 2 & 3 & 4 & 5 & 6 & 7 \\
            \hline
            $\mathcal{C}_E^X$ & 0.50 & 0.50 & 0.50 & 0.00 & 0.00 & 0.00 & 0.50 & 0.00 \\
            $\mathcal{C}_K^X$ & 0.26 & 0.35 & 0.36 & 0.23 & 0.25 & 0.23 & 0.29 & 0.23 \\
            $\mathcal{C}_P^X$ & 0.19 & 0.25 & 0.20 & 0.02 & 0.03 & 0.02 & 0.14 & 0.02 \\
        \end{tabular}
    \end{center}

    \begin{center}
        \begin{tabular}{r|cccccccc}
            $v$ & 8 & 9 & 10 & 11 & 12 & 13 & 14 \\
            \hline
            $\mathcal{C}_E^X$ & 0.00 & 0.00 & 0.00 & 0.00 & 0.00 & 0.00 & 0.00 \\
            $\mathcal{C}_K^X$ & 0.23 & 0.23 & 0.23 & 0.23 & 0.23 & 0.25 & 0.23 \\
            $\mathcal{C}_P^X$ & 0.02 & 0.02 & 0.02 & 0.02 & 0.02 & 0.03 & 0.02 \\
        \end{tabular}
    \end{center}

\end{example}

\subsection{System controllability methods}
\label{subsection-methods-controllability}

One of the important aspects when dealing with a biomedical network is to be able to influence part of, or even the totality of its nodes. This objective is not restricted to biomedical networks, but is one of the earliest and most studied network theory problems, generally known as the (target) network controllability problem. This is intrinsically an optimization problem, as any network can be controlled from a sufficiently large controlling set, e.g., the entire set of nodes. However, one is interested in finding the minimal set of nodes needed in order to achieve such control. Early works in the field come from the 60's, however it was in early 2010's that one of the key results from the field has been provided in~\cite{Liu2011}, namely the possibility of efficiently determining the minimum set of nodes needed to control an entire network. This result sparked a new interest in the field, with a strong emphasis on possible applications in the biomedical field.

The prospect of enforcing control over a biomedical network has been investigated within two main methodological approaches: that of (structural) network controllability, see, e.g.,~\cite{Liu2011,Czeizler18,Guo18,Kanhaiya17} and that of dominating sets~\cite{Wuchty14,Nacher16,Zhang2015}. In the current section we review both approaches, detailing the theoretical, algorithmic and biomedical applicability aspects of these methods.

\subsubsection{Network Controlability}

In general terms, we say that a directed network, or generally any dynamical system, is controllable from a set of input nodes if, with a suitable selection of input values for these nodes, the entire network can be driven from any initial state to any desired final state within a finite time. The state of a node is given by its numerical value; in the context of biomedical networks this could be for example the expression level of a gene/protein. From one time point to another this value/state evolves depending (linearly) on the values/states of its neighbors; this is why such systems are also known as linear time invariant dynamical systems (LTIS), as their evolution 
can be described by the system of linear differential equations:

\begin{equation}
    \label{no_in_sys}
    \frac{dx(t)}{dt}=Ax(t),
\end{equation}
where $x(t)=(x_1(t), ..., x_n(t))^T$ is the $n$-dimensional vector describing the system's state at time $t$, and $A\in \mathbb R^{n\times n}$ is the time-invariant \textit{state transition matrix}, describing how each of these states are influencing the dynamics of the system. Namely, for any nodes $i,j$, $1\leq i,j\leq n$ within the network, the entry $a_{i,j}$ of matrix $A$ either documents the weight of the influence of node $j$ over the node $i$, if there exists a (directed) edge $(i,j)$, or is equal to 0 otherwise. By convention, from now on during this section, all vectors are considered to be column vectors so that the matrix-vector multiplications are well defined.

Assume we allow the system to be influenced through an $m$-dimensional input controller, i.e. an input vector $u$ of real functions, $u:\mathbb{R}\rightarrow\mathbb{R}^m$ acting upon some $m$ nodes of the network. Consider the subset of input nodes $I\subseteq \{1,2,\ldots, n\}$, $I=\{i_1,\ldots,i_m\}$, $1\leq m\leq n$, as the nodes of the network on which the external input is applied to; such nodes are also known as \emph{driver/driven} nodes. The system~(\ref{no_in_sys}) becomes:

\begin{equation}
    \frac{dx(t)}{dt}=Ax(t) + B_Iu(t), \label{in_sys}
\end{equation}

where $B_I\in\mathbb{R}^{n\times m}$ is the characteristic matrix associated to the subset $I$, $B_I(r,s)=1$ if $r=i_s$ and $B_I(r,s)=0$ otherwise, for all $1\leq r\leq n$ and $1\leq s\leq m$.

It is often the case, particularly in the bio-medical field, that it is enough to enforce control only over a subset of the network nodes to get the desired change in the network dynamics. The associated optimization problem is known as the \emph{target controllability problem}, or more generally as the \emph{output controlability problem}. Thus, consider a subset of target nodes $T\subseteq\{1,2,\ldots,n\}$, $T=\{t_1,\ldots,t_l\}$, $m\leq l\leq n$, thought of as a subset of the nodes of the linear dynamical system, whose dynamics we aim to control (as defined below) through a suitable choice of input nodes and of an input vector. The subset of target nodes can also be defined through its characteristic matrix $C_T\in \mathbb{R}^{l\times n}$, defined as $C_T(r,s)=1$ if $t_r=s$ and $C_T(r,s)=0$ otherwise, for all $1\leq r\leq l$ and $1\leq s\leq n$. We will denote by the triplet $(A,I,T)$ the targeted linear (time-invariant) dynamical system defined by matrix $A$, with input set $I$, and target set $T$.  In case $T$ is the entire network, we can omit it from the notation.

\begin{definition}[Target controllability of linear networks]

    Given a target linear dynamical system $(A,I,T)$ (or similarly a linear network), we say that this system is \emph{target controllable} (or simply controllable if the target is the entire network) if for any $x(0)\in\mathbb{R}^n$ and any $\alpha\in\mathbb{R}^l$, there is an input vector $u:\mathbb{R}\rightarrow\mathbb{R}^m$ such that the solution $\tilde{x}$ of \eqref{in_sys} eventually coincides with $\alpha$ on its $T$-components, i.e., $C_T\tilde{x}(\tau)=\alpha$, for some $\tau\geq 0$.

\end{definition}

Note that, the general case of \emph{output controlability} is defined similarly, with the only difference that instead of a 0--1 characteristic matrix $C_T$ we have an arbitrary matrix $C\in\mathbb{R}^{n\times m}$, with $m\leq n$, defining the output $y\in\mathbb{R}^m$ of the LTIS as a linear combination of the solution $x(t)$ of equation~(\ref{in_sys}).

Intuitively, the system being target controllable means that for any input state $x_0$ and any desired final state $\alpha$ of the target nodes, there is a suitable input vector $u$ driving the target nodes to $\alpha$. Obviously, the input vector $u$ depends on $x_0$ and $\alpha$.

It is known from~\cite{Kal63} that the system $(A,I)$ is controllable from some input controller $u(t)$ if and only if the rank of the matrix $(B_I \mid AB_I \mid A^2B_I \mid\ldots\mid A^{n-1}B_I)$, known as the \emph{controlability matrix}, is equal to $n$, the number of nodes in the network.  The operator $|$ denotes here the simple matrix concatenation operation. This criterion is known as the \emph{Kalman's criterion for full controllability}~\cite{Kal63}, and it is easily extendable to the case of target (or generally, output) controllability:

\begin{theorem}
    \label{Kal}

    A targeted linear dynamical system with inputs $(A,I,T)$ is controllable if and only if the rank of its controllability matrix $(C_TB_I\mid C_TAB_I\mid C_TA^2B_I\mid\ldots\mid C_TA^{n-1}B_I')$ is equal with $|T|$~\cite{Kal63}.

\end{theorem}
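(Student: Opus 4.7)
The plan is to reduce the statement to the classical Kalman full-controllability criterion by combining two standard ingredients: the variation-of-constants formula for the solution of~\eqref{in_sys}, and the Cayley--Hamilton theorem applied to $A$.

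First, I would write the solution of~\eqref{in_sys} in closed form as $x(t) = e^{At}x(0) + \int_0^t e^{A(t-s)} B_I u(s)\,ds$, and apply $C_T$ to both sides. Since $C_T e^{A\tau} x(0)$ is a fixed vector determined by the initial state and the chosen time horizon $\tau$, it can be absorbed into a redefinition of the target as $\alpha' = \alpha - C_T e^{A\tau} x(0)$. Target controllability in the sense of the definition is therefore equivalent to the linear map
$$u \;\longmapsto\; \int_0^\tau C_T\, e^{A(\tau-s)}\, B_I\, u(s)\,ds$$
being surjective onto $\mathbb{R}^{|T|}$ for some (equivalently, every) $\tau > 0$.

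Next, I would invoke Cayley--Hamilton to expand $e^{At} = \sum_{k=0}^{n-1} \phi_k(t)\,A^k$ with suitable analytic scalar coefficients $\phi_k$, reflecting the fact that $A^n, A^{n+1}, \ldots$ are all linear combinations of $I, A, \ldots, A^{n-1}$. Substituting this into the integral yields
$$\int_0^\tau C_T\, e^{A(\tau-s)}\, B_I\, u(s)\,ds \;=\; \sum_{k=0}^{n-1} \bigl(C_T A^k B_I\bigr) \int_0^\tau \phi_k(\tau-s)\, u(s)\,ds,$$
which lies in the column span of $M = (C_T B_I \mid C_T A B_I \mid \ldots \mid C_T A^{n-1} B_I)$. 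Hence if $\operatorname{rank}(M) < |T|$ the reachable set is contained in a proper subspace of $\mathbb{R}^{|T|}$ and target controllability fails, giving the easy direction of the equivalence.

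For the converse I would adopt the Gramian-based construction. Define the output reachability Gramian $W(\tau) = \int_0^\tau C_T\, e^{A(\tau-s)}\, B_I B_I^T\, e^{A^T(\tau-s)}\, C_T^T\,ds$. Assuming $\operatorname{rank}(M) = |T|$, a short argument combining the Cayley--Hamilton expansion with the real-analyticity of the $\phi_k$ (to rule out a nonzero vector $z$ with $z^T C_T A^k B_I = 0$ for all $k$ and yet $z^T W(\tau) z > 0$) shows that $W(\tau)$ shares its image with $MM^T$ and is therefore invertible. Given any $\alpha' \in \mathbb{R}^{|T|}$, the explicit open-loop control $u(s) = B_I^T\, e^{A^T(\tau-s)}\, C_T^T\, W(\tau)^{-1} \alpha'$ then drives the $T$-coordinates of the solution to exactly $\alpha'$ at time $\tau$. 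The main obstacle, and where essentially all the work lies, is this range identification of $W(\tau)$ with the column span of $M$; it is the output-controllability analogue of the classical fact that the full Kalman reachability Gramian has the same range as the full controllability matrix, and once it is in hand the equivalence in Theorem~\ref{Kal} follows at once.
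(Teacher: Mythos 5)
The paper does not actually prove this theorem: it is quoted directly as Kalman's criterion from \cite{Kal63}, with only the remark that the full-controllability version ``is easily extendable'' to the target case, so there is no in-paper argument to compare yours against. Your sketch is the standard textbook proof and its architecture is sound: variation of constants plus Cayley--Hamilton shows that the output-reachable set from $x(0)=0$ is exactly the column span of $M=(C_TB_I\mid C_TAB_I\mid\ldots\mid C_TA^{n-1}B_I)$, which gives necessity (take $x(0)=0$ in the paper's definition), and the output Gramian $W(\tau)$ together with the explicit control $u(s)=B_I^Te^{A^T(\tau-s)}C_T^TW(\tau)^{-1}\alpha'$ gives sufficiency. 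One small slip: the parenthetical describing the range-identification step states the inclusion in the unneeded direction. Ruling out a nonzero $z$ with $z^TC_TA^kB_I=0$ for all $k$ is immediate from the full-rank hypothesis on $M$ and needs no analyticity; what actually carries the weight is the opposite implication, namely that $z^TW(\tau)z=0$ forces the continuous nonnegative integrand $\|B_I^Te^{A^T(\tau-s)}C_T^Tz\|^2$ to vanish identically on $[0,\tau]$, whence differentiating $z^TC_Te^{At}B_I$ at $t=0$ yields $z^TC_TA^kB_I=0$ for all $k$ and so $z=0$. Since your following sentence correctly asserts that $W(\tau)$ and $MM^T$ share their image --- which subsumes the correct statement --- this is a misdescription of which direction is the hard one rather than a gap. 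A second point worth making explicit is that the necessity direction relies on the universal quantifier over $x(0)$ in the paper's definition: for a nonzero initial state the attainable set is a union over $\tau$ of translates $C_Te^{A\tau}x(0)+\operatorname{range}(M)$ and could a priori exceed a proper subspace, but choosing $x(0)=0$ collapses it to $\operatorname{range}(M)$ itself, which is where the rank condition bites.
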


Intuitively, the controllability matrix describes all weighted paths from the input nodes to the target nodes in the directed graph associated to the linear dynamical system. This leads to the notion of \emph{control path} from an input node to a target node and an input node \emph{controlling} a target node. This line of thought can be further developed into a structural formulation of the targeted controllability. Thus, although the Kalman controlability criterion seems to suggest that the (target) controllability problem is strictly related to a particular valuation of the state transition matrix $A$, it turns out that this is actually a network property. To explain this, we define two matrices $C$, $D$ of the same size to be \emph{equivalent} if they have the same set of non-zero entries. The key concept here is that of \emph{structural controlability}:

\begin{definition}[Structural target controllability of linear networks] We say that a given targeted linear dynamical system $(A,I,T)$ (or similarly a linear network) is \emph{structural target controllable} if there exists a matrix $A'$ equivalent with $A$ such that the system $(A',I,T)$ is target controllable. That is, $(A,I,T)$ becomes controllable by replacing $A$ with any suitable equivalent matrix $A'$, or similarly, by replacing the weights of the network's edges with any other non-zero values. 
\end{definition}

They key result, proved in \cite{Lin1974, Shi76}, is that if a system $(A,I)$ is structurally controllable, then it is controllable for all, except a \emph{thin} set, of equivalent matrices $A'$. We recall that a thin subset of the $n$-dimensional complex space is nowhere dense and has Lebesgue measure 0. This indicates that the controlability problem can indeed be reduced to its structural version, known as the \emph{structural (target) controllability problem}, which is ultimately a property of the network's connectivity rather than of the effective edge weights. This problem can be defined both as a decision problem (is the linear time-invariant dynamical system $(A,I,T)$ structurally target controllable?) as well as an optimization problem (given matrix $A$ of size $n\times n$ and a subset of target nodes $T\subseteq\{1,2,\ldots,n\}$, find the minimal $m$, $m\leq n$, and a suitable choice for the size-$m$ set of input nodes $I$ such that $(A,I,T)$ is structurally target controllable). We are most interested in the latter formulation of the problem. 

From an algorithmic standpoint, the Kalman criterion for controllability from Theorem~\ref{Kal}  can lead to an efficient/polynomial time algorithm for verifying whether a given network can be (target) controlled from a given input controller, acting upon some of the network nodes. However, it does not lead to an efficient way of computing the composition of such a minimal controller, or even its (minimal) size. As it turns out, the (target) control optimization problem has been open for more than 20 years, before it was shown in~\cite{Liu2011} that there exists a low polynomial time algorithm (i.e., cubic in the size of the network) that can provide both the size and the composition of the minimum input controller needed to control an entire network. It was thus surprising that the corresponding result for the target control optimization problem was proved by~\cite{Czeizler18} to be NP-hard, meaning that any exact optimization algorithm would have to run in exponential time. Several approximation algorithms have been developed for this latter problem, with good results when applied to both real-word networks, e.g. bio-medical, social, electrical etc., as well as artificial networks~\cite{Kanhaiya17,Czeizler18,Gao14}. 

Another generalization coming from the practical application of network controllability in pharmacology and biomedicine considers the case when the input controller should, or it is desired to, be selected mostly from a subset $P\subseteq \{1,2,\ldots, n\}$ of the network nodes. For example, in network pharmacology studies, it is more advantageous to consider controllers consisting of those genes/proteins for which there already exist approved drugs known to target that particular element. This leads to the so-called \emph{input-constrained targeted structural controllability} problem: 

\begin{definition}[Input-constrained, structural target controllability]
    
    Given a linear dynamical system defined by a matrix $A\in\mathbb{R}^{n\times n}$, a set of target nodes $T$ and a set of preferred nodes $P$, the \emph{input-constrained structural targeted controllability} problem asks to find a smallest-sized input set $I$ whose intersection with $P$ is maximal, such that the targeted linear dynamical system with inputs $(A,I,T)$ is controllable, i.e., such that the rank of the matrix $(C_TB_I\mid C_TAB_I\mid C_TA^2B_I\mid\ldots\mid C_TA^{n-1}B_I)$ is equal with $|T|$.

\end{definition}

While this optimization problem is also NP-hard (as a generalization of the previous case), several efficient approximation algorithms have been introduced in~\cite{Kanhaiya17} and~\cite{Popescu20} and analyzed particularly with respect to bio-medical networks.

\subsubsection{Minimum dominating sets}

The second frequently used controllability technique, applied especially in the case of undirected networks, relies on the notion of domination. 

\begin{definition}[Dominating set]
    
    Given a size $n$ network, we say that a subset $D\subseteq \{1,2,\ldots, n\}$ of its nodes is \emph{dominating} the network if any node within the network is either in $D$ or it is adjacent to a node in $D$.

\end{definition}

\begin{definition}[The Minimum Dominating Set Problem]

    Given a (undirected) network, the \emph{Minimum Dominating Set Problem (MDSP)} asks to find a dominating set of minimum cardinality. Such MDS can thus be considered as an efficient first-hand controlling set for the respective network. 

\end{definition}

From an algorithmic point of view MDSP is a classical NP-hard problem~\cite{Garey90}. However, as before, there are many efficient approximation algorithms providing efficient solutions even in the case of large networks~\cite{Hedar12,Alon1990,Nacher14}. 

The MDS approach has been applied in connection to various fields such as the controllability  of  biological  networks~\cite{Wuchty14,Nacher16,Zhang2015}, design and analysis of wireless computer networks~\cite{Yu13,Wu06}, the study of social networks~\cite{Daliri18,Wang11}, etc. Several generalizations of MDSP have also been considered:

\begin{definition}[The Minimum k-Dominating Set Problem]

    Given a (undirected) network, a set $D$ of its nodes is \emph{$k$-path dominating} if any node from the network is either in $D$ or it is connected to a node in $D$ through a path of length at most $k$. The \emph{minimum k-dominating set problem} (MkDP), also known as the \emph{d-hop dominating} problem, asks to find a $k$-path dominating set of minimal cardinality.

\end{definition}

\begin{definition}[The Red--Blue ($k$-)Dominating Problem]
    
    Given a network and two subsets of its nodes, Red and Blue, the \emph{Red--Blue ($k$-)do\-mi\-nation problem}, asks to find a minimum subset of the Blue nodes ($k$-path) dominating all the Red nodes.

 \end{definition}
 
Such generalizations were considered in~\cite{Lucia04,Nguyen20,Santos17} and in~\cite{Faisal11}, respectively. All these generalizations, while preserving the algorithmic complexity of the original MDSP, provide sometimes a closer connection to practical applications. 

Compared to the structural network controllability formalism, the dominating set methodology enforces a reachability type of control. Indeed, in the latter case, the dominating nodes are controlling the network either by direct interactions, in case of MDSP, or by paths of length at most $k$, for MkDSP. This is different than within the structural network controllability formalism, where each controller node is asked to enforce an independent control over its dominating nodes. This requirement imposes that each controlled node, i.e. \emph{target} node, is positioned at the end of a different lengthed path starting from the controller, i.e., the \emph{driver} node.

\subsection{Software}
\label{subsection-methods-software}

In this section we briefly present several applications and libraries which can be used for generating, visualizing or analyzing graphs and networks. All of the mentioned software is open-source, free to use and still supported as of the time of writing (namely, it was updated in the past two years and it runs on the latest versions of the most common operating systems or using the most recent versions of the corresponding programming languages).

\subsubsection{NetworkX}
\label{subsubsection-methods-software-networkx}

\textit{NetworkX} (\cite{networkx-article}) is a Python package for the creation, ma\-ni\-pu\-la\-tion, and study of the structure, dynamics, and functions of complex networks (\cite{networkx-url}). As all of the networks described in this survey (except the simple demonstrative network from Figure ~\ref{fig:rex}) represent personalized protein-protein interaction networks and have been generated externally, we have focused the usage of the \textit{NetworkX} package solely on analyzing them. The package provides out-of-the-box implemented algorithms and functions for identifying all of the centrality measures presented in this survey. It is worth noting that \textit{NetworkX} also provides several algorithms for different layout routines, and basic interconnections with dedicated graph visualization packages, such as \textit{Matplotlib} (\cite{matplotlib-url}) and \textit{Graphviz} (\cite{graphviz-url}). However, graph visualization does not represent the main goal of \textit{NetworkX}, and its creators recommend using a dedicated and fully-featured tool instead.

An alternative to \textit{NetworkX} is the \textit{igraph} collection (\cite{igraph-url}), which provides network analysis libraries and packages for R, Mathematica and C/C++ in addition to Python.

We used the \textit{NetworkX} package to compute the centrality measures corresponding to the nodes in all of the networks presented in this survey. Specifically, the functions used for ranking the nodes according to the corresponding centrality method described in Section \ref{subsection-methods-centrality} are presented in Table \ref{table-networkx-centrality}.

\begin{table}
    \caption{The \textit{NetworkX} functions used for computing the centrality methods.}
    \label{table-networkx-centrality}
    \begin{center}
          \begin{tabular}{|l|l|c|}
               \hline
               \textbf{Method} & \textbf{Function} & \textbf{Parameters} \\
               \hline
               Degree centrality & \textit{in\_degree\_centrality} & - \\
               Closeness centrality & \textit{closeness\_centrality} & default \\
               Harmonic centrality & \textit{harmonic\_centrality} & default \\
               Eccentricity centrality & \textit{eccentricity} & default \\
               Betweenness centrality & \textit{betweenness\_centrality} & default \\
               Eigenvector-based prestige & \textit{eigenvector\_centrality} & default \\
               \hline
          \end{tabular}
    \end{center}
\end{table}

Additionally, we used the \textit{dominating\_set} function with the default parameters to run the MDS analysis described in Section \ref{subsection-methods-controllability} for each network. While the package does not provide a direct equivalent for the MkDS algorithm, it allows for the initial network manipulation (i.e. parsing the network and adding specific edges) required to transform the default MDS analysis into an MkDS one. The methodology and results are presented in detail in Section~\ref{subsection-applications-results}.

\subsubsection{Cytoscape}
\label{subsubsection-methods-software-cytoscape}

\textit{Cytoscape} (\cite{cytoscape-article}) is a standalone software platform implemented in Java for visualizing complex networks, together with attribute data integration (\cite{cytoscape-url}).  \textit{Cytoscape} was initially developed for biological research and biological networks visualization and analysis. In addition to the basic functionality, \textit{Cytoscape} provides extended functionality through the use of so-called apps, which can be community-developed and add support for additional analysis methods, layouts, or database connections, among others.

Another well-established desktop network visualization software is \textit{Gephi} (\cite{gephi-url}), a Java application with the same capabilities, including the usage of plugins for extended functionality.

We have used the \textit{Cytoscape} application to render all of the network vi\-su\-ali\-za\-tions presented throughout this survey. The layout of the nodes was automatically calculated and applied using the default preferred layout algorithm, while the general design of the networks (e.g. labels, colors, or sizes) is based on the default style. A screenshot with the visualization of a network, can be seen in Figure \ref{figure-cytoscape-screenshot}.

\begin{figure}
     \begin{center}
          \includegraphics[width=0.9\textwidth]{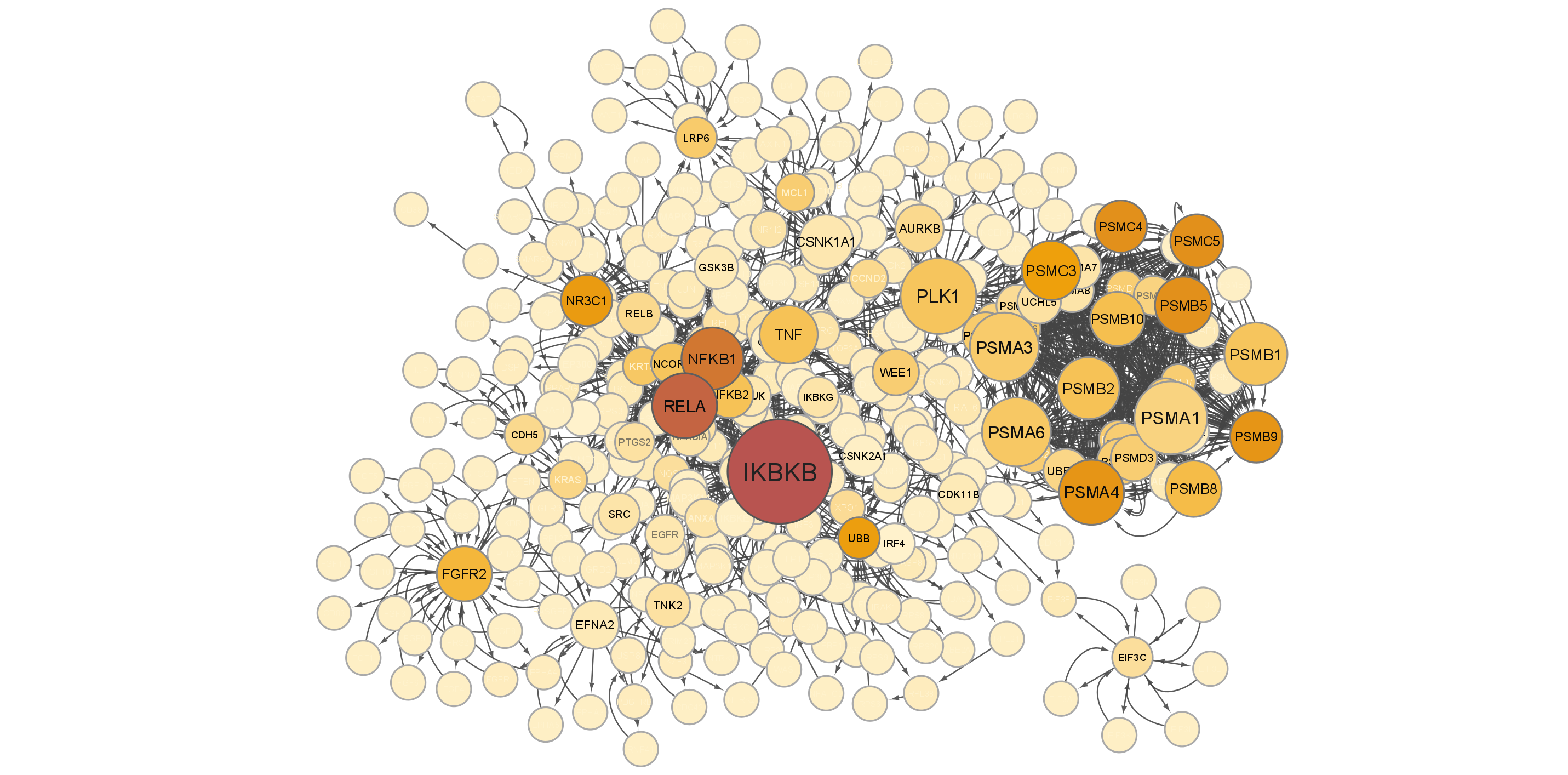}
     \end{center}
     \caption{Example of network visualization with \emph{Cytoscape}. The color of a node is proportional to its indegree (darker nodes have higher indegree), while the size of a node is proportional to its outdegree (larger nodes have higher outdegree).}
     \label{figure-cytoscape-screenshot}
\end{figure}

\subsubsection{NetControl4BioMed}
\label{subsubsection-methods-software-netcontrol4biomed}

\textit{NetControl4BioMed} (\cite{netcontrol4biomed-article}, updated version in \cite{NetControl4BioMed-v2021}) is a C\# (.NET Core) web application for the generation and structural target controllability analysis of protein-protein interaction networks, freely available at \url{https://netcontrol.combio.org/}. To this end, the application integrates and combines multiple protein and protein-protein interaction databases which, based on the user input, are used in the process.

We have used the \textit{NetControl4BioMed} platform to generate the personalized protein-protein interaction networks presented in this survey. The details of the network generation method (e.g. used interaction databases, or algorithm parameters) are presented in Section \ref{subsection-applications-data}. A screenshot with the home page of the application can be seen in Figure \ref{figure-netcontrol4biomed-screenshot}.

\begin{figure}
     \begin{center}
          \includegraphics[width=0.9\textwidth]{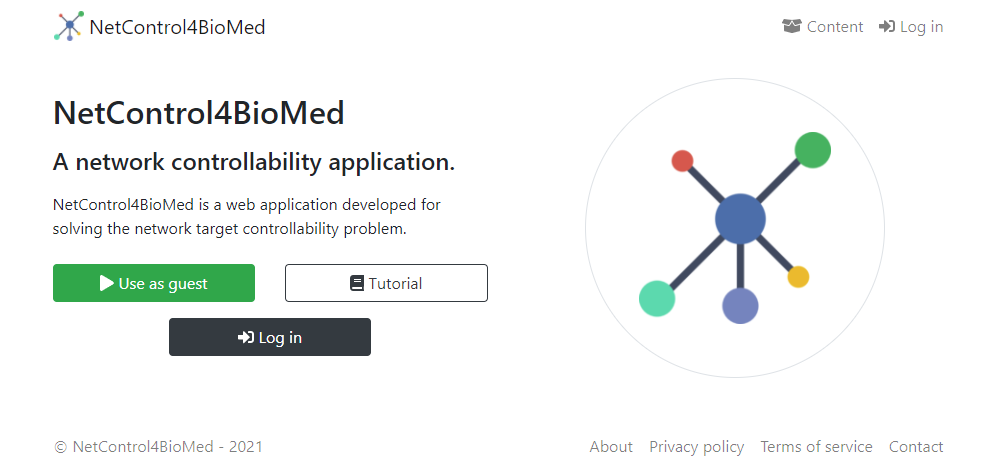}
     \end{center}
     \caption{Screenshot with the home page of \textit{NetControl4BioMed}.}
     \label{figure-netcontrol4biomed-screenshot}
\end{figure}

Additionally, we used the network analysis section of the application, with the default parameters, to run the structural target controllability analysis described in Section \ref{subsection-methods-controllability} for each network. The methodology and results are presented in detail in Section \ref{subsection-applications-results}.

\section{Applications of network modeling in personalized medicine}
\label{section-applications}

We demonstrate our approach to personalized medicine on three multiple my\-e\-lo\-ma patients, through the analysis of customized networks built around the mutated genes of each patient, the disease-specific survivability-essential genes, and the genes targeted by drugs in the standard therapy for multiple myeloma.

\subsection{Constructing personalized disease networks}
\label{subsection-applications-data}

We used the patient data documented in \cite{Lohr:2014aa}, which includes information about the evolution of the mutation, the mutated genes, and the stage of treatment, as well as details about the patient characteristics, such as age, race, and gender, among others. In this study we focus on the mutated genetic information for tumor samples 28, 38 and 191.

We used a list of 70 multiple myeloma-specific essential genes presented in \cite{essentialGenesGeoffrey, essentialGenesRodger, essentialGenesJan} and shown in Table \ref{essentialGenesTable}.

\begin{table}[htb]
    \caption{The disease-specific survivability-essential genes for multiple myeloma (\cite{essentialGenesGeoffrey, essentialGenesRodger, essentialGenesJan}).}
    \label{essentialGenesTable}
    \begin{center}
        \begin{tabular}{|llllll|}
            \hline
			AGTRAP & EIF3C & KIFC2 & PLK1 & RGAG1 & TRIM68 \\
            AURKB & EIF4A3 & LEPROT & PRPF8 & RPL27 & TUBGCP6 \\
            CARS & GNRH2 & MAF & PSMA1 & RPL38 & UBB \\
            CCND2 & GPR77 & MCL1 & PSMA3 & RRM1 & UBQLNL \\
            CDK11 & HIP1 & MED14 & PSMA4 & RSF1 & ULK3 \\
            CDK11A & IK & MED15 & PSMA6 & SF3A1 & USP36 \\
            CDK11B & IKBKB & NDC80 & PSMC3 & SLC25A23 & USP8 \\
            CKAP5 & IKZF1 & NFKB1 & PSMC4 & SNRPA1 & WBSCR22 \\
            COPB2 & IKZF3 & NFKB2 & PSMC5 & SNW1 & WEE1 \\
            CSNK1A1 & IRF4 & NUF2 & RAB11A & TNK2 & XPO1 \\
            CUL9 & KIF11 & PCDH18 & RELA & TPMT &  \\
            EFNA2 & KIF18A & PIM2 & RELB & TRIM21 &  \\
            \hline
        \end{tabular}
    \end{center}
\end{table}

We used the multiple myeloma standard treatment drugs described in \cite{mmFoundation, Engelhardt2019, 10.1056/NEJMoa1714678} and their corresponding drug-targets from DrugBank \cite{drugbank}. The list thus obtained is presented in Table \ref{drugTargetsTable} while the associated drugs and the standard drug therapies are documented in Tables~\ref{standardDrugsTable} and~\ref{standardTreatmentsTable}, respectively.

\begin{table}[htb]
    \caption{The targets of the drugs used in standard therapy lines for multiple myeloma.}
    \label{drugTargetsTable}
    \begin{center}
        \begin{tabular}{|llllll|}
            \hline
			ANXA1 & GSR & NR0B1 & PSMB2 & SLAMF7 & TUBB \\
            CD38 & HSD11B1 & NR1I2 & PSMB5 & TNF & XPO1 \\
            CDH5 & NFKB1 & NR3C1 & PSMB8 & TNFSF11 &  \\
            CRBN & NOLC1 & PSMB1 & PSMB9 & TOP2A &  \\
            FGFR2 & NOS2 & PSMB10 & PTGS2 & TUBA4A &  \\
            \hline
        \end{tabular}
    \end{center}
\end{table}

\begin{table}[htb]
    \caption{The main drugs for treating multiple myeloma.}
    \label{standardDrugsTable}
    \begin{center}
        \begin{tabular}{|p{0.3\textwidth}|p{0.6\textwidth}|}
            \hline
            \textbf{Drug} & \textbf{Drug-targets} \\
            \hline
			Bortezomib & PSMB1, PSMB5 \\
            Carfilzomib & PSMB1, PSMB10, PSMB2, PSMB5, PSMB8, PSMB9 \\
            Carmustine & GSR \\
            Cisplatin & A2M, ATOX1, MPG, TF \\
            Cyclophosphamide & NR1I2 \\
            Dacetuzumab & CD40 \\
            Daratumumab & CD38 \\
            Dexamethasone & ANXA1, NOS2, NR0B1, NR1I2, NR3C1 \\
            Doxorubicin & NOLC1, TOP2A \\
            Elotuzumab & SLAMF7 \\
            Etoposide & TOP2A, TOP2B \\
            Ixazomib & PSMB5 \\
            Lenalidomide & CDH5, CRBN, PTGS2, TNFSF11 \\
            Liposomal doxorubicin & TOP2A, TOP2B \\
            Oprozomib & LMP7, PSMB5 \\
            Panobinostat & HDAC1, HDAC2, HDAC3, HDAC6, HDAC7, HIF1A, VEGF \\
            Plerixafor & CXCR4 \\
            Pomalidomide & CRBN, PTGS2, TNF \\
            Prednisone & HSD11B1, NR3C1 \\
            Selinexor & XPO1 \\
            Thalidomide & CRBN, FGFR2, NFKB1, PTGS2, TNF \\
            Vincristine & TUBA4A, TUBB \\
            \hline
        \end{tabular}
    \end{center}
\end{table}

\begin{table}[htb]
    \caption{Lines of therapy for treating multiple myeloma.}
    \label{standardTreatmentsTable}
    \begin{center}
        \begin{tabular}{|p{0.9\textwidth}|}
            \hline
            \textbf{First line of therapy} \\
            \hline
            Lenalidomide, Bortezomib, Dexamethasone (RVD) \\
            Bortezomib, Cyclophosphamide, Dexamethasone (VKD) \\
            Bortezomib, Thalidomide, Dexamethasone (VTD) \\
            Bortezomib, Melphalan, Prednisone \\
            Vincristine, Doxorucibin, Dexamethasone (VAD) \\
            Melphalan, Dexamethasone \\
            Daratumumab, Bortezomib, Thalidomide, Dexamethasone \\
            Carfilzomib, Thalidomide, Dexamethasone (KTD) \\
            \hline
            \textbf{Second line of therapy} \\
            \hline
            Carfilzomib, Lenalidomide, Dexamethasone (KRD) \\
            Ixazomib, Lenalidomide, Dexamethasone \\
            Elotumuzab, Lenalidomide, Dexamethasone \\
            Bendamustine, Lenalidomide, Dexamethasone \\
            \hline
            \textbf{Third line of therapy} \\
            \hline
            Pomalidomide, Dexamethasone \\
            Panobinostat, Bortezomib, Dexamethasone \\
            Daratumumab \\
            \hline
        \end{tabular}
    \end{center}
\end{table}

We used the \textit{NetControl4BioMed} application, which was briefly presented in Section~\ref{subsubsection-methods-software-netcontrol4biomed}, to build a personalized protein-protein interaction network for each multiple myeloma patient around the seed genes defined by the patient-specific mutated genes, the disease-specific survivability-essential genes, and the drug-target genes corresponding to the standard treatment drugs. We used the interaction data from the \emph{KEGG}, \emph{OmniPath}, \emph{InnateDB} and \emph{SIGNOR} databases. The networks include all paths of length at most three between the seed proteins that could be formed with these interactions. We added to the network all intermediary nodes that were not part of the set of seed nodes. An overview of the generated networks is presented in Table \ref{table-results-networks}.

\begin{table}[htb]
    \caption{The summary of the generated personalized protein-protein interaction networks.}
    \label{table-results-networks}
    \begin{center}
        \begin{tabular}{|lcccccc|}
            \hline
			\textbf{Network} & \textbf{G} & \textbf{N} & \textbf{E} & \textbf{CC} & \textbf{D} & \textbf{AD} \\
			\hline
			Tumor sample 28 & $36$ & $360$ & $1486$ & $4$ & $12$ & $6.02$ \\
			Tumor sample 38 & $117$ & $446$ & $1732$ & $5$ & $12$ & $5.77$ \\
			Tumor sample 191 & $218$ & $515$ & $1955$ & $3$ & $12$ & $5.74$ \\
            \hline
        \end{tabular}
    \end{center}
    \small
    \caption*{\textbf{G}: number of mutated genes in the sample; \textbf{N}: number of nodes in the network; \textbf{E}: number of edges; \textbf{CC}: number of connected components; \textbf{D}: the network diameter; \textbf{AD}: the network average degree.}
\end{table}

We used the \textit{NetworkX} package, briefly presented in Section \ref{subsubsection-methods-software-networkx}, to rank the genes in each of the generated networks based on the centrality measures described in Section \ref{subsection-methods-centrality}. Tables \ref{table-results-centrality-in-degree}, \ref{table-results-centrality-closeness}, \ref{table-results-centrality-harmonic}, \ref{table-results-centrality-eccentricity}, \ref{table-results-centrality-betweenness}, and \ref{table-results-centrality-eigenvector} present, for each network, the essential genes that have the corresponding centrality measure higher than the median among all other genes in the network.

\begin{table}
    \caption{The top ranked genes based on their in-degree centrality, for each network.}
    \label{table-results-centrality-in-degree}
    \begin{center}
		\small
        \begin{tabular}{|ll|ll|ll|}
            \hline
			\multicolumn{2}{|c|}{\textbf{Tumor sample 28}} & \multicolumn{2}{|c|}{\textbf{Tumor sample 38}} & \multicolumn{2}{|c|}{\textbf{Tumor sample 191}} \\
			\hline
			IKBKB & AURKB & IKBKB & CCND2 & RELA & CCND2 \\
            RELA & XPO1 & RELA & XPO1 & IKBKB & XPO1 \\
            NFKB1 & EIF3C & NFKB1 & EIF3C & NFKB1 & EIF3C \\
            PSMC4 & TNK2 & PSMC4 & TNK2 & PSMC4 & TNK2 \\
            PSMC5 & CSNK1A1 & PSMC5 & CSNK1A1 & PSMC5 & RPL38 \\
            PSMA4 & KIF11 & PSMA4 & KIF11 & UBB & CSNK1A1 \\
            UBB & EFNA2 & UBB & CDK11B & PSMA4 & SNW1 \\
            PSMC3 & CDK11B & PSMC3 & SNW1 & PSMC3 & CDK11B \\
            NFKB2 & SNW1 & NFKB2 & EFNA2 & NFKB2 & KIF11 \\
            PLK1 & USP8 & PLK1 & IKZF3 & PLK1 & IKZF3 \\
            PSMA6 & IKZF3 & MCL1 & MED14 & MCL1 & IKZF1 \\
            PSMA3 & RSF1 & PSMA6 & USP8 & PSMA6 & EFNA2 \\
            MCL1 & IKZF1 & WEE1 & RSF1 & WEE1 & RRM1 \\
            WEE1 & RPL38 & PSMA3 & IKZF1 & PSMA3 & USP8 \\
            PSMA1 & TRIM21 & RELB & RPL38 & RELB & TRIM21 \\
            CCND2 & MAF & PSMA1 & TRIM21 & PSMA1 & RSF1 \\
            RELB &  & AURKB &  & AURKB & RPL27 \\
            \hline
        \end{tabular}
    \end{center}
\end{table}

\begin{table}
    \caption{The top ranked genes based on their closeness centrality, for each network.}
    \label{table-results-centrality-closeness}
    \begin{center}
		\small
        \begin{tabular}{|ll|ll|ll|}
            \hline
			\multicolumn{2}{|c|}{\textbf{Tumor sample 28}} & \multicolumn{2}{|c|}{\textbf{Tumor sample 38}} & \multicolumn{2}{|c|}{\textbf{Tumor sample 191}} \\
			\hline
			UBB & CSNK1A1 & IKBKB & PSMA4 & RELA & PSMA4 \\
            IKBKB & PSMA4 & UBB & XPO1 & UBB & CSNK1A1 \\
            RELA & XPO1 & RELA & CSNK1A1 & IKBKB & PSMC3 \\
            NFKB1 & PLK1 & NFKB1 & PSMC3 & NFKB1 & XPO1 \\
            NFKB2 & SNW1 & NFKB2 & SNW1 & NFKB2 & SNW1 \\
            RELB & PSMC3 & RELB & TRIM21 & RELB & PSMC4 \\
            WEE1 & PSMC4 & PSMA3 & PIM2 & WEE1 & AURKB \\
            PSMA3 & PSMC5 & MCL1 & PSMC4 & PSMA3 & PSMC5 \\
            MCL1 & TRIM21 & WEE1 & PSMC5 & MCL1 & PIM2 \\
            RSF1 & KIF11 & RSF1 & PSMA1 & RSF1 & PSMA1 \\
            CCND2 & PSMA1 & CCND2 & KIF11 & CCND2 & TRIM21 \\
            PSMA6 & PIM2 & PLK1 & TNK2 & PSMA6 & KIF11 \\
             &  & PSMA6 & AURKB & PLK1 & TNK2 \\
            \hline
        \end{tabular}
    \end{center}
\end{table}

\begin{table}
    \caption{The top ranked genes based on their harmonic centrality, for each network.}
    \label{table-results-centrality-harmonic}
    \begin{center}
		\small
        \begin{tabular}{|ll|ll|ll|}
            \hline
			\multicolumn{2}{|c|}{\textbf{Tumor sample 28}} & \multicolumn{2}{|c|}{\textbf{Tumor sample 38}} & \multicolumn{2}{|c|}{\textbf{Tumor sample 191}} \\
			\hline
			IKBKB & PSMC3 & IKBKB & PSMC5 & RELA & PSMC3 \\
            RELA & PSMA6 & RELA & PLK1 & IKBKB & PLK1 \\
            NFKB1 & MCL1 & NFKB1 & CCND2 & NFKB1 & PSMA6 \\
            UBB & CCND2 & UBB & RSF1 & UBB & RSF1 \\
            NFKB2 & RSF1 & NFKB2 & XPO1 & NFKB2 & CCND2 \\
            PSMA4 & PLK1 & RELB & PSMA1 & RELB & PSMA1 \\
            RELB & XPO1 & PSMA4 & CSNK1A1 & PSMA4 & AURKB \\
            PSMA3 & PSMA1 & PSMA3 & SNW1 & WEE1 & CSNK1A1 \\
            WEE1 & CSNK1A1 & WEE1 & TRIM21 & PSMA3 & XPO1 \\
            PSMC4 & SNW1 & MCL1 & PIM2 & PSMC4 & SNW1 \\
            PSMC5 &  & PSMC3 & AURKB & MCL1 & PIM2 \\
             &  & PSMA6 & TNK2 & PSMC5 & TRIM21 \\
             &  & PSMC4 & KIF11 &  &  \\
            \hline
        \end{tabular}
    \end{center}
\end{table}

\begin{table}
    \caption{The top ranked genes based on their eccentricity, for each network.}
    \label{table-results-centrality-eccentricity}
    \begin{center}
		\small
        \begin{tabular}{|ll|ll|ll|}
            \hline
			\multicolumn{2}{|c|}{\textbf{Tumor sample 28}} & \multicolumn{2}{|c|}{\textbf{Tumor sample 38}} & \multicolumn{2}{|c|}{\textbf{Tumor sample 191}} \\
			\hline
			IKBKB & RELB & IKBKB & EFNA2 & IKBKB & EFNA2 \\
            RELA & MCL1 & RELA & RELB & RELA & MCL1 \\
            NFKB1 & EFNA2 & NFKB1 & MCL1 & NFKB1 & TNK2 \\
            PSMA4 & TNK2 & PLK1 & TNK2 & PLK1 & CCND2 \\
            PLK1 & CCND2 & PSMA4 & CCND2 & PSMA4 & EIF3C \\
            PSMC3 & EIF3C & PSMC3 & EIF3C & PSMC5 & CDK11B \\
            PSMC5 & CDK11B & PSMC5 & CDK11B & PSMC4 & XPO1 \\
            PSMC4 & XPO1 & PSMC4 & XPO1 & PSMC3 & SNW1 \\
            PSMA6 & IRF4 & PSMA6 & IRF4 & PSMA6 & RPL38 \\
            PSMA1 & SNW1 & PSMA1 & SNW1 & PSMA1 & IRF4 \\
            PSMA3 & USP8 & PSMA3 & PIM2 & PSMA3 & PIM2 \\
            UBB & PIM2 & UBB & MED14 & UBB & USP8 \\
            NFKB2 & HIP1 & NFKB2 & USP8 & NFKB2 & IKZF1 \\
            WEE1 & KIF11 & WEE1 & RSF1 & CSNK1A1 & HIP1 \\
            CSNK1A1 & RSF1 & AURKB & HIP1 & WEE1 & KIF11 \\
            AURKB &  & CSNK1A1 & KIF11 & AURKB & IKZF3 \\
             &  &  &  & RELB & RAB11A \\
            \hline
        \end{tabular}
    \end{center}
\end{table}

\begin{table}
    \caption{The top ranked genes based on their betweenness centrality, for each network.}
    \label{table-results-centrality-betweenness}
    \begin{center}
		\small
        \begin{tabular}{|ll|ll|ll|}
            \hline
			\multicolumn{2}{|c|}{\textbf{Tumor sample 28}} & \multicolumn{2}{|c|}{\textbf{Tumor sample 38}} & \multicolumn{2}{|c|}{\textbf{Tumor sample 191}} \\
			\hline
			IKBKB & PSMA6 & IKBKB & RELB & IKBKB & RELB \\
            NFKB1 & EIF3C & RELA & CCND2 & RELA & UBB \\
            RELA & RELB & NFKB1 & EIF3C & NFKB1 & PSMA6 \\
            PLK1 & PSMA1 & PLK1 & PSMA1 & PLK1 & PSMA1 \\
            CSNK1A1 & PSMA4 & PSMC3 & PSMA4 & NFKB2 & PSMC5 \\
            NFKB2 & PSMC5 & NFKB2 & PSMC5 & WEE1 & PSMA4 \\
            PSMA3 & EFNA2 & WEE1 & EFNA2 & CSNK1A1 & EFNA2 \\
            WEE1 & USP8 & PSMA3 & MED14 & CDK11B & PSMC4 \\
            PSMC3 & PIM2 & CDK11B & USP8 & PSMA3 & RAB11A \\
            CDK11B & PSMC4 & CSNK1A1 & PIM2 & MCL1 & USP8 \\
            MCL1 & MED14 & TNK2 & PSMC4 & AURKB & MED14 \\
            TNK2 & RPL38 & AURKB & RSF1 & EIF3C & MAF \\
            AURKB & MAF & MCL1 & SNW1 & PSMC3 & PIM2 \\
            XPO1 & RSF1 & XPO1 & RPL38 & TNK2 & SNW1 \\
            CCND2 & SNW1 & PSMA6 & MAF & XPO1 & RPL38 \\
            UBB &  & UBB &  & CCND2 & RSF1 \\
            \hline
        \end{tabular}
    \end{center}
\end{table}

\begin{table}
    \caption{The top ranked genes based on their eigenvector-based prestige, for each network.}
    \label{table-results-centrality-eigenvector}
    \begin{center}
		\small
        \begin{tabular}{|ll|ll|ll|}
            \hline
			\multicolumn{2}{|c|}{\textbf{Tumor sample 28}} & \multicolumn{2}{|c|}{\textbf{Tumor sample 38}} & \multicolumn{2}{|c|}{\textbf{Tumor sample 191}} \\
			\hline
			PSMC4 & NFKB1 & PSMC4 & NFKB2 & PSMC4 & CCND2 \\
            PSMC5 & CCND2 & PSMC5 & CCND2 & PSMC5 & WEE1 \\
            PSMA4 & NFKB2 & PSMA4 & WEE1 & PSMA4 & MCL1 \\
            PSMC3 & WEE1 & PSMC3 & MCL1 & PSMC3 & RELB \\
            PSMA6 & MCL1 & PSMA6 & RELB & PSMA6 & RSF1 \\
            PSMA3 & RELB & PSMA3 & PLK1 & PSMA3 & AURKB \\
            PSMA1 & RSF1 & PSMA1 & RSF1 & PSMA1 & PLK1 \\
            UBB & SNW1 & UBB & SNW1 & UBB & SNW1 \\
            IKBKB & XPO1 & IKBKB & XPO1 & IKBKB & XPO1 \\
            RELA & PLK1 & RELA & AURKB & RELA & KIF11 \\
             &  & NFKB1 &  & NFKB1 & RPL27 \\
             &  &  &  & NFKB2 & PIM2 \\
            \hline
        \end{tabular}
    \end{center}
\end{table}

\subsection{Analysis methods}
\label{subsection-applications-methods}

In this section, we describe the methodology applied for the analysis of the data presented in Section \ref{subsection-applications-data}, and using the centrality measures presented in Section \ref{subsection-methods-centrality} and the controllability methods presented in Section \ref{subsection-methods-controllability}.

All analyses follow a similar flow, aiming to identify, through the different controllability methods, a subset of ``important'' drug-target genes in the network, ranked according to the number of essential genes that they control. Once a set has been identified, the corresponding drugs are ranked based on a very similar criterion, taking into account the number of essential genes that their drug-targets control. The three top ranked drugs are then reported as a personalized drug combination therapy customized to the patient, and compared with the standard lines of therapy.

Firstly, we ran each type of analysis on the complete networks and data sets, considering as targets the essential genes presented in Table \ref{essentialGenesTable} and as preferred inputs the drug-target genes in Table \ref{drugTargetsTable}. Then, to reduce the noise in the data, for each network and each centrality measure we focused on the subgraphs formed by the top ranked essential genes and the drug-targets that can reach them through a path of length of $3$ or less, and all the interactions between them. For each such subgraph, all analyses consider as targets the corresponding essential genes, and as preferred inputs the corresponding drug-target genes.

We used the \textit{NetControl4BioMed} application, briefly presented in Section \ref{subsubsection-methods-software-netcontrol4biomed}, to run the structural target controllability analyses with the previously described setup. We used the default parameters, with a set maximum path length of $3$. The analysis outputs a set of genes that can control the entire target set, from which only the controlling drug-targets are selected and further considered.

Similarly, we used the \textit{NetworkX} package, presented in Section~\ref{subsubsection-methods-software-networkx}, to run the minimum dominating set analyses with the previously described setup. An additional step was required in order to enable the default function to perform the required minimum $k$-dominating set analysis. To this end, we transformed the analyzed networks by adding a direct edge between each pair of nodes indirectly connected by a path of length of $k$ or less. We set $k = 3$, and used the default parameters. The analysis outputs a set of genes that can dominate the entire target set, from which only the dominating drug-targets are selected and further considered.

\subsection{Results}
\label{subsection-applications-results}

In this section, we present and discuss the results of the analyses described in Section \ref{subsection-applications-methods}, with the aim of suggesting personalized treatments.

\subsubsection{Structural controllability analysis}
\label{subsection-applications-results-stc}

We applied the input-constrained structural target controllability method on the complete networks and sets of essential and drug-target genes. The drug-target genes in the obtained controlling sets are presented in Table \ref{table-results-stc-all}.

\begin{table}
    \caption{The drug-target genes in the controlling set obtained by the structural target controllability analysis.}
	\label{table-results-stc-all}
	\begin{center}
		\small
        \begin{tabular}{|ll|ll|ll|}
            \hline
			\multicolumn{2}{|c|}{\textbf{Tumor sample 28}} & \multicolumn{2}{|c|}{\textbf{Tumor sample 38}} & \multicolumn{2}{|c|}{\textbf{Tumor sample 191}} \\
			\hline
            ANXA1 & NOLC1 & ANXA1 & PSMB1 & ANXA1 & PSMB2 \\
            NFKB1 &  & NOLC1 & TNF &  &  \\
            \hline
        \end{tabular}
	\end{center}
\end{table}

Next, we applied the method once more on the subgraphs described in Section \ref{subsection-applications-methods} and corresponding to each centrality measure. The drug-target genes in the controlling sets obtained for each measure are presented in Tables~\ref{table-results-stc-in-degree}, \ref{table-results-stc-closeness}, \ref{table-results-stc-harmonic}, \ref{table-results-stc-eccentricity}, \ref{table-results-stc-betweenness} and \ref{table-results-stc-eigenvector}.

\begin{table}
    \caption{The drug-target genes in the controlling set obtained by the structural target controllability analysis corresponding to the top ranked essential genes based on their in-degree centrality.}
	\label{table-results-stc-in-degree}
	\begin{center}
		\small
        \begin{tabular}{|ll|ll|ll|}
            \hline
			\multicolumn{2}{|c|}{\textbf{Tumor sample 28}} & \multicolumn{2}{|c|}{\textbf{Tumor sample 38}} & \multicolumn{2}{|c|}{\textbf{Tumor sample 191}} \\
			\hline
                ANXA1 & NOLC1 & ANXA1 & TNF & ANXA1 & NOLC1 \\
                NFKB1 & TNF & NFKB1 &  & NFKB1 & TNF \\
            \hline
        \end{tabular}
	\end{center}
\end{table}

\begin{table}
    \caption{The drug-target genes in the controlling set obtained by the structural target controllability analysis corresponding to the top ranked essential genes based on their closeness centrality.}
	\label{table-results-stc-closeness}
	\begin{center}
		\small
        \begin{tabular}{|ll|ll|ll|}
            \hline
			\multicolumn{2}{|c|}{\textbf{Tumor sample 28}} & \multicolumn{2}{|c|}{\textbf{Tumor sample 38}} & \multicolumn{2}{|c|}{\textbf{Tumor sample 191}} \\
			\hline
                ANXA1 & PSMB8 & FGFR2 & PSMB5 & ANXA1 & PSMB1 \\
                NFKB1 & TNF & NFKB1 & TNF & NFKB1 & TNF \\
                 &  & NR3C1 &  & NR3C1 & XPO1 \\
            \hline
        \end{tabular}
	\end{center}
\end{table}

\begin{table}
    \caption{The drug-target genes in the controlling set obtained by the structural target controllability analysis corresponding to the top ranked essential genes based on their harmonic centrality.}
	\label{table-results-stc-harmonic}
	\begin{center}
		\small
        \begin{tabular}{|ll|ll|ll|}
            \hline
			\multicolumn{2}{|c|}{\textbf{Tumor sample 28}} & \multicolumn{2}{|c|}{\textbf{Tumor sample 38}} & \multicolumn{2}{|c|}{\textbf{Tumor sample 191}} \\
			\hline
                ANXA1 & PTGS2 & ANXA1 & TNF & ANXA1 & PSMB1 \\
                NFKB1 & TNF & NFKB1 & XPO1 & NFKB1 & TNF \\
                PSMB1 & XPO1 & PSMB10 &  &  &  \\        
            \hline
        \end{tabular}
	\end{center}
\end{table}

\begin{table}
    \caption{The drug-target genes in the controlling set obtained by the structural target controllability analysis corresponding to the top ranked essential genes based on their eccentricity.}
	\label{table-results-stc-eccentricity}
	\begin{center}
		\small
        \begin{tabular}{|ll|ll|ll|}
            \hline
			\multicolumn{2}{|c|}{\textbf{Tumor sample 28}} & \multicolumn{2}{|c|}{\textbf{Tumor sample 38}} & \multicolumn{2}{|c|}{\textbf{Tumor sample 191}} \\
			\hline
                ANXA1 & PSMB1 & ANXA1 & NR3C1 & ANXA1 & NFKB1 \\
                NFKB1 & TNF & NFKB1 &  &  &  \\
            \hline
        \end{tabular}
	\end{center}
\end{table}

\begin{table}
    \caption{The drug-target genes in the controlling set obtained by the structural target controllability analysis corresponding to the top ranked essential genes based on their betweenness centrality.}
	\label{table-results-stc-betweenness}
	\begin{center}
		\small
        \begin{tabular}{|ll|ll|ll|}
            \hline
			\multicolumn{2}{|c|}{\textbf{Tumor sample 28}} & \multicolumn{2}{|c|}{\textbf{Tumor sample 38}} & \multicolumn{2}{|c|}{\textbf{Tumor sample 191}} \\
			\hline
                ANXA1 & TNF & ANXA1 & PTGS2 & ANXA1 & PTGS2 \\
                NFKB1 &  & NFKB1 & TNF & NFKB1 & TNF \\
                 &  & PSMB10 &  &  &  \\
            \hline
        \end{tabular}
	\end{center}
\end{table}

\begin{table}
    \caption{The drug-target genes in the controlling set obtained by the structural target controllability analysis corresponding to the top ranked essential genes based on their eigenvector-based prestige.}
	\label{table-results-stc-eigenvector}
	\begin{center}
		\small
        \begin{tabular}{|ll|ll|ll|}
            \hline
			\multicolumn{2}{|c|}{\textbf{Tumor sample 28}} & \multicolumn{2}{|c|}{\textbf{Tumor sample 38}} & \multicolumn{2}{|c|}{\textbf{Tumor sample 191}} \\
			\hline
                NFKB1 & XPO1 & ANXA1 & PSMB5 & ANXA1 & TNF \\
                PSMB1 &  & NFKB1 & TNF & NFKB1 &  \\
                 &  & PSMB2 &  &  &  \\           
            \hline
        \end{tabular}
	\end{center}
\end{table}

Next, we studied the results in the context of drug therapy. For each network and centrality measure, the drug-target genes were matched with all the multiple myeloma standard treatment drug targeting them, while the latter was sorted according to the number of top ranked essential genes controlled through one or more of their drug-target genes. Then, we selected the three top drugs as our proposed drug combination for that patient's treatment. Informally, this approach aims for a reinforced influential effect over as many essential genes as possible through a minimal combination of at most 3 drugs. The chosen drugs are documented in Tables \ref{table-drugs-stc-in-degree}, \ref{table-drugs-stc-closeness}, \ref{table-drugs-stc-harmonic}, \ref{table-drugs-stc-eccentricity}, \ref{table-drugs-stc-betweenness} and \ref{table-drugs-stc-eigenvector}. Following an identical procedure, we also suggest a personalized drug combination in Table \ref{table-drugs-stc-all} based on the conclusions reached by the controllability analysis on the whole tumor network. The associated most effective standard drug therapy for each tumor sample and centrality measure are given in Tables \ref{table-drugs-stc-standardTherapy-in-degree}, \ref{table-drugs-stc-standardTherapy-closeness}, \ref{table-drugs-stc-standardTherapy-harmonic}, \ref{table-drugs-stc-standardTherapy-eccentricity}, \ref{table-drugs-stc-standardTherapy-betweenness}, \ref{table-drugs-stc-standardTherapy-eigenvector} and \ref{table-drugs-stc-standardTherapy-all}.

\begin{table}
	\caption{The proposed drug therapy and the number of essential genes it controls, for the structural target controllability analysis corresponding to the top ranked essential genes based on their in-degree.}
	\label{table-drugs-stc-in-degree}
	\begin{center}
		\small 
		\begin{tabular}{|l|l|c|}
			\hline
			\textbf{Tumor Sample} & \textbf{Proposed therapy} & \textbf{Controlled EG} \\
			\hline
			MM-0028-Tumor & Dexamethasone, Thalidomide, Doxorubicin & 29 \\
			MM-0038-Tumor & Dexamethasone, Thalidomide, Pomalidomide & 24 \\
			MM-0191-Tumor & Dexamethasone, Thalidomide, Pomalidomide & 28 \\
			\hline
		\end{tabular}
	\end{center}
\end{table}

\begin{table}
    \caption{The most effective standard drug therapy and the number of essential genes it controls, for the structural target controllability analysis corresponding to the top ranked essential genes based on their in-degree.}
	\label{table-drugs-stc-standardTherapy-in-degree}
	\begin{center}
		\small
        \begin{tabular}{|l|l|c|}
            \hline
			\textbf{Tumor Sample} & \textbf{Standard therapy} & \textbf{Controlled EG} \\
			\hline
				MM-0028-Tumor & Bortezomib, Thalidomide, Dexamethasone & 27 \\
				MM-0038-Tumor & Bortezomib, Thalidomide, Dexamethasone & 24 \\
				MM-0191-Tumor & Bortezomib, Thalidomide, Dexamethasone & 28 \\
            \hline
        \end{tabular}
	\end{center}
\end{table}

\begin{table}
	\caption{The proposed drug therapy and the number of essential genes it controls, for structural target controllability analysis corresponding to the top ranked essential genes based on their closeness.}
	\label{table-drugs-stc-closeness}
	\begin{center}
		\small 
		\begin{tabular}{|l|l|c|}
			\hline
			\textbf{Tumor Sample} & \textbf{Proposed therapy} & \textbf{Controlled EG} \\
			\hline
			MM-0028-Tumor & Thalidomide, Dexamethasone, Pomalidomide & 21 \\
			MM-0038-Tumor & Thalidomide, Pomalidomide, Dexamethasone & 23 \\
			MM-0191-Tumor & Thalidomide, Dexamethasone, Selinexor & 24 \\
			\hline
		\end{tabular}
	\end{center}
\end{table}

\begin{table}
    \caption{The most effective standard drug therapy and the number of essential genes it controls, for the structural target controllability analysis corresponding to the top ranked essential genes based on their closeness.}
	\label{table-drugs-stc-standardTherapy-closeness}
	\begin{center}
		\small
        \begin{tabular}{|l|l|c|}
            \hline
			\textbf{Tumor Sample} & \textbf{Standard therapy} & \textbf{Controlled EG} \\
			\hline
				MM-0028-Tumor & Bortezomib, Thalidomide, Dexamethasone & 21 \\
				MM-0038-Tumor & Bortezomib, Thalidomide, Dexamethasone & 24 \\
				MM-0191-Tumor & Bortezomib, Thalidomide, Dexamethasone & 24 \\
            \hline
        \end{tabular}
	\end{center}
\end{table}

\begin{table}
	\caption{The proposed drug therapy and the number of essential genes it controls, for the structural target controllability analysis corresponding to the top ranked essential genes based on their harmonic centrality.}
	\label{table-drugs-stc-harmonic}
	\begin{center}
		\small 
		\begin{tabular}{|l|l|c|}
			\hline
			\textbf{Tumor Sample} & \textbf{Proposed therapy} & \textbf{Controlled EG} \\
			\hline
			MM-0028-Tumor & Thalidomide, Dexamethasone, Selinexor & 21 \\
			MM-0038-Tumor & Dexamethasone, Thalidomide, Pomalidomide & 22 \\
			MM-0191-Tumor & Dexamethasone, Thalidomide, Bortezomib & 22 \\
			\hline
		\end{tabular}
	\end{center}
\end{table}

\begin{table}
    \caption{The most effective standard drug therapy and the number of essential genes it controls, for the structural target controllability analysis corresponding to the top ranked essential genes based on their harmonic centrality.}
	\label{table-drugs-stc-standardTherapy-harmonic}
	\begin{center}
		\small
        \begin{tabular}{|l|l|c|}
            \hline
			\textbf{Tumor Sample} & \textbf{Standard therapy} & \textbf{Controlled EG} \\
			\hline
				MM-0028-Tumor & Bortezomib, Thalidomide, Dexamethasone & 20 \\
				MM-0038-Tumor & Bortezomib, Thalidomide, Dexamethasone & 22 \\
				MM-0191-Tumor & Bortezomib, Thalidomide, Dexamethasone & 22 \\
            \hline
        \end{tabular}
	\end{center}
\end{table}

\begin{table}
	\caption{The proposed drug therapy and the number of essential genes it controls, for the structural target controllability analysis corresponding to the top ranked essential genes based on their eccentricity.}
	\label{table-drugs-stc-eccentricity}
	\begin{center}
		\small 
		\begin{tabular}{|l|l|c|}
			\hline
			\textbf{Tumor Sample} & \textbf{Proposed therapy} & \textbf{Controlled EG} \\
			\hline
			MM-0028-Tumor & Thalidomide, Pomalidomide, Bortezomib & 25 \\
			MM-0038-Tumor & Dexamethasone, Prednisone, Thalidomide & 3 \\
			MM-0191-Tumor & Thalidomide, Dexamethasone & 26 \\
			\hline
		\end{tabular}
	\end{center}
\end{table}

\begin{table}
    \caption{The most effective standard drug therapy and the number of essential genes it controls, for the structural target controllability analysis corresponding to the top ranked essential genes based on their eccentricity.}
	\label{table-drugs-stc-standardTherapy-eccentricity}
	\begin{center}
		\small
        \begin{tabular}{|l|l|c|}
            \hline
			\textbf{Tumor Sample} & \textbf{Standard therapy} & \textbf{Controlled EG} \\
			\hline
				MM-0028-Tumor & Bortezomib, Thalidomide, Dexamethasone & 26 \\
				MM-0038-Tumor & Bortezomib, Thalidomide, Dexamethasone & 3 \\
				MM-0191-Tumor & Bortezomib, Thalidomide, Dexamethasone & 26 \\
            \hline
        \end{tabular}
	\end{center}
\end{table}

\begin{table}
	\caption{The proposed drug therapy and the number of essential genes it controls, for the structural target controllability analysis corresponding to the top ranked essential genes based on their betweenness.}
	\label{table-drugs-stc-betweenness}
	\begin{center}
		\small 
		\begin{tabular}{|l|l|c|}
			\hline
			\textbf{Tumor Sample} & \textbf{Proposed therapy} & \textbf{Controlled EG} \\
			\hline
			MM-0028-Tumor & Thalidomide, Pomalidomide, Dexamethasone & 26 \\
			MM-0038-Tumor & Thalidomide, Pomalidomide, Dexamethasone & 26 \\
			MM-0191-Tumor & Dexamethasone, Thalidomide, Pomalidomide & 27 \\
			\hline
		\end{tabular}
	\end{center}
\end{table}

\begin{table}
    \caption{The most effective standard drug therapy and the number of essential genes it controls, for the structural target controllability analysis corresponding to the top ranked essential genes based on their betweenness.}
	\label{table-drugs-stc-standardTherapy-betweenness}
	\begin{center}
		\small
        \begin{tabular}{|l|l|c|}
            \hline
			\textbf{Tumor Sample} & \textbf{Standard therapy} & \textbf{Controlled EG} \\
			\hline
				MM-0028-Tumor & Bortezomib, Thalidomide, Dexamethasone & 26 \\
				MM-0038-Tumor & Bortezomib, Thalidomide, Dexamethasone & 26 \\
				MM-0191-Tumor & Bortezomib, Thalidomide, Dexamethasone & 27 \\
            \hline
        \end{tabular}
	\end{center}
\end{table}

\begin{table}
	\caption{The proposed drug therapy and the number of essential genes it controls, for the structural target controllability analysis corresponding to the top ranked essential genes based on their eigenvector centrality.}
	\label{table-drugs-stc-eigenvector}
	\begin{center}
		\small 
		\begin{tabular}{|l|l|c|}
			\hline
			\textbf{Tumor Sample} & \textbf{Proposed therapy} & \textbf{Controlled EG} \\
			\hline
			MM-0028-Tumor & Thalidomide, Selinexor, Bortezomib & 3 \\
			MM-0038-Tumor & Dexamethasone, Thalidomide, Pomalidomide & 19 \\
			MM-0191-Tumor & Thalidomide, Dexamethasone, Pomalidomide & 21 \\
			\hline
		\end{tabular}
	\end{center}
\end{table}

\begin{table}
    \caption{The most effective standard drug therapy and the number of essential genes it controls, for the structural target controllability analysis corresponding to the top ranked essential genes based on their eigenvector centrality.}
	\label{table-drugs-stc-standardTherapy-eigenvector}
	\begin{center}
		\small
        \begin{tabular}{|l|l|c|}
            \hline
			\textbf{Tumor Sample} & \textbf{Standard therapy} & \textbf{Controlled EG} \\
			\hline
				MM-0028-Tumor & Bortezomib, Thalidomide, Dexamethasone & 2 \\
				MM-0038-Tumor & Bortezomib, Thalidomide, Dexamethasone & 19 \\
				MM-0191-Tumor & Bortezomib, Thalidomide, Dexamethasone & 21 \\
            \hline
        \end{tabular}
	\end{center}
\end{table}

\begin{table}
	\caption{The proposed drug therapy and the number of essential genes it controls, for the structural target controllability analysis corresponding to the essential genes in the patient network.}
	\label{table-drugs-stc-all}
	\begin{center}
		\small 
		\begin{tabular}{|l|l|c|}
			\hline
			\textbf{Tumor Sample} & \textbf{Proposed therapy} & \textbf{Controlled EG} \\
			\hline
			MM-0028-Tumor & Thalidomide, Doxorubicin, Dexamethasone & 31 \\
			MM-0038-Tumor & Thalidomide, Pomalidomide, Doxorubicin & 28 \\
			MM-0191-Tumor & Dexamethasone, Carfilzomib & 25 \\
			\hline
		\end{tabular}
	\end{center}
\end{table}

\begin{table}
    \caption{The most effective standard drug therapy and the number of essential genes it controls, for the structural target controllability analysis corresponding to the essential genes in the patient network.}
	\label{table-drugs-stc-standardTherapy-all}
	\begin{center}
		\small
        \begin{tabular}{|l|l|c|}
            \hline
			\textbf{Tumor Sample} & \textbf{Standard therapy} & \textbf{Controlled EG} \\
			\hline
				MM-0028-Tumor & Bortezomib, Thalidomide, Dexamethasone & 29 \\
				MM-0038-Tumor & Bortezomib, Thalidomide, Dexamethasone & 29 \\
				MM-0191-Tumor & Lenalidomide, Bortezomib, Dexamethasone & 24 \\
            \hline
        \end{tabular}
	\end{center}
\end{table}

It is immediate to see that Thalidomide and Dexamethasone are predicted to be extremely suitable for treating the unique disease circumstances afflicting all the patients. These two drugs are commonly preferred as a first choice when approaching multiple myeloma cases and are part of two therapies in combination with either  Bortezomib or Carfilzomib frequently use in latest medical practice, namely VTD and KTD respectively. These two combinations are supported by successful outcomes in different studies, such as \cite{ROUSSEL2020e874, KTDStudy2}. 

The third drug combination spot is usually taken by Pomalidomide. While this drug is mostly considered in later stages of the treatment in the context of traditional care, the analysis predicts this drug to have a strong reinforcement impact on the activity of Thalidomide for these specific cases, as it shares the same controlled essential genes with the latter. Our approach also identifies several outliers to the predominant 3-drug combination, namely Selinexor, Doxorubicin, Prednisone, Bortezomib, and Carfilzomib. The first two drugs are not considered until the last stages of the treatment in standard therapy lines. Although the conclusions collected by this study for these three patients align with this traditional approach, they still provide grounds to consider them over other late stage drugs if the treatment progresses to an evolved phase. On the other hand, the remaining three are relevant in starting medical diagnosis, with Bortezomib being specially remarkable. Following a more traditional approach, these three drugs may be given preference over Pomalidomide for the first prescriptions.

The suggested drug therapies are predicted to control most of the considered essential genes in the whole network and each of the centrality subnetworks. Consequently, it is expected for them to have a significant favorable impact on the condition of our targeted patients. Furthermore, the prescribed sequence of drugs are very close to the known therapies frequently used in the current state of the art treatment. All of these events provide a strong foundation for the feasibility of this method in personalized medicine.

\subsubsection{Minimum dominating set analysis}
\label{subsection-applications-results-m3ds}

We applied the input-constrained minimum 3-dominating set method on the complete networks and sets of essential and drug-target genes. The drug-target genes in the obtained dominating sets are shown in Table \ref{table-results-m3ds-all}.

\begin{table}
	\caption{The drug-target genes in the dominating set obtained by the minimum dominating set analysis.}
	\label{table-results-m3ds-all}
	\begin{center}
		\small
		\begin{tabular}{|ll|ll|ll|}
			\hline
			\multicolumn{2}{|c|}{\textbf{Tumor sample 28}} & \multicolumn{2}{|c|}{\textbf{Tumor sample 38}} & \multicolumn{2}{|c|}{\textbf{Tumor sample 191}}
			\\
			\hline
			ANXA1 & PSMB5 & ANXA1 & PSMB5 & ANXA1 & PSMB5 \\
            NFKB1 & PSMB9 & NFKB1 & PSMB8 & NFKB1 & PTGS2 \\
            NOLC1 & PTGS2 & NOLC1 & PSMB9 & NOLC1 & TNF \\
            NR3C1 & TNF & NR3C1 & PTGS2 & NR3C1 & TNFSF11 \\
            PSMB1 & TNFSF11 & PSMB1 & TNF &  &  \\
            PSMB2 & TOP2A & PSMB10 & TNFSF11 &  &  \\
             &  & PSMB2 & TOP2A &  &  \\
			\hline
		\end{tabular}
	\end{center}
\end{table}

Next, we applied the method once more on the subgraphs described in Section \ref{subsection-applications-methods} and corresponding to each centrality measure. The drug-target genes in the dominating set obtained for each measure are presented in Tables~\ref{table-results-m3ds-in-degree}, \ref{table-results-m3ds-closeness}, \ref{table-results-m3ds-harmonic}, \ref{table-results-m3ds-eccentricity}, \ref{table-results-m3ds-betweenness} and \ref{table-results-m3ds-eigenvector}.

\begin{table}
	\caption{The drug-target genes in the dominating set obtained by the minimum dominating set analysis corresponding to the top ranked essential genes based on their in-degree centrality.}
	\label{table-results-m3ds-in-degree}
	\begin{center}
		\small
		\begin{tabular}{|ll|ll|ll|}
			\hline
			\multicolumn{2}{|c|}{\textbf{Tumor sample 28}} & \multicolumn{2}{|c|}{\textbf{Tumor sample 38}} & \multicolumn{2}{|c|}{\textbf{Tumor sample 191}}
			\\
			\hline
			ANXA1 & PSMB5 & ANXA1 & NR3C1 & ANXA1 & NR3C1 \\
            NFKB1 & PSMB9 & NFKB1 & TNF & NFKB1 & TNF \\
            NOLC1 & PTGS2 & NOLC1 & TNFSF11 & NOLC1 & TNFSF11 \\
            NR3C1 & TNF &  &  &  &  \\
            PSMB1 & TNFSF11 &  &  &  &  \\
            PSMB2 & TOP2A &  &  &  &  \\
			\hline
		\end{tabular}
	\end{center}
\end{table}

\begin{table}
	\caption{The drug-target genes in the dominating set obtained by the minimum dominating set analysis corresponding to the top ranked essential genes based on their closeness centrality.}
	\label{table-results-m3ds-closeness}
	\begin{center}
		\small
		\begin{tabular}{|ll|ll|ll|}
			\hline
			\multicolumn{2}{|c|}{\textbf{Tumor sample 28}} & \multicolumn{2}{|c|}{\textbf{Tumor sample 38}} & \multicolumn{2}{|c|}{\textbf{Tumor sample 191}}
			\\
			\hline
			ANXA1 & TNF & ANXA1 & TNF & ANXA1 & TNF \\
            NFKB1 & TNFSF11 & NFKB1 & TNFSF11 & NFKB1 & TNFSF11 \\
            NR3C1 &  & NR3C1 &  & NR3C1 &  \\
			\hline
		\end{tabular}
	\end{center}
\end{table}

\begin{table}
	\caption{The drug-target genes in the dominating set obtained by the minimum dominating set analysis corresponding to the top ranked essential genes based on their harmonic centrality.}
	\label{table-results-m3ds-harmonic}
	\begin{center}
		\small
		\begin{tabular}{|ll|ll|ll|}
			\hline
			\multicolumn{2}{|c|}{\textbf{Tumor sample 28}} & \multicolumn{2}{|c|}{\textbf{Tumor sample 38}} & \multicolumn{2}{|c|}{\textbf{Tumor sample 191}}
			\\
			\hline
			ANXA1 & TNF & ANXA1 & TNF & ANXA1 & TNF \\
            NFKB1 &  & NFKB1 & TNFSF11 & NFKB1 & TNFSF11 \\
             &  & NR3C1 &  & NR3C1 &  \\
			\hline
		\end{tabular}
	\end{center}
\end{table}

\begin{table}
	\caption{The drug-target genes in the dominating set obtained by the minimum dominating set analysis corresponding to the top ranked essential genes based on their eccentricity.}
	\label{table-results-m3ds-eccentricity}
	\begin{center}
		\small
		\begin{tabular}{|ll|ll|ll|}
			\hline
			\multicolumn{2}{|c|}{\textbf{Tumor sample 28}} & \multicolumn{2}{|c|}{\textbf{Tumor sample 38}} & \multicolumn{2}{|c|}{\textbf{Tumor sample 191}}
			\\
			\hline
			ANXA1 & TNF & ANXA1 & TNF & ANXA1 & NR3C1 \\
            NFKB1 & TNFSF11 & NFKB1 & TNFSF11 & NFKB1 & TNF \\
            NR3C1 &  & NR3C1 &  & NOLC1 & TNFSF11 \\
			\hline
		\end{tabular}
	\end{center}
\end{table}

\begin{table}
	\caption{The drug-target genes in the dominating set obtained by the minimum dominating set analysis corresponding to the top ranked essential genes based on their betweenness centrality.}
	\label{table-results-m3ds-betweenness}
	\begin{center}
		\small
		\begin{tabular}{|ll|ll|ll|}
			\hline
			\multicolumn{2}{|c|}{\textbf{Tumor sample 28}} & \multicolumn{2}{|c|}{\textbf{Tumor sample 38}} & \multicolumn{2}{|c|}{\textbf{Tumor sample 191}}
			\\
			\hline
			ANXA1 & PSMB5 & ANXA1 & PTGS2 & ANXA1 & PTGS2 \\
            NFKB1 & PSMB9 & NFKB1 & TNF & NFKB1 & TNF \\
            NOLC1 & PTGS2 & NR3C1 & TNFSF11 & NR3C1 & TNFSF11 \\
            NR3C1 & TNF &  &  &  &  \\
            PSMB1 & TNFSF11 &  &  &  &  \\
            PSMB2 & TOP2A &  &  &  &  \\
			\hline
		\end{tabular}
	\end{center}
\end{table}

\begin{table}
	\caption{The drug-target genes in the dominating set obtained by the minimum dominating set analysis corresponding to the top ranked essential genes based on their eigenvector-based prestige.}
	\label{table-results-m3ds-eigenvector}
	\begin{center}
		\small
		\begin{tabular}{|ll|ll|ll|}
			\hline
			\multicolumn{2}{|c|}{\textbf{Tumor sample 28}} & \multicolumn{2}{|c|}{\textbf{Tumor sample 38}} & \multicolumn{2}{|c|}{\textbf{Tumor sample 191}}
			\\
			\hline
			ANXA1 & TNF & ANXA1 & TNF & ANXA1 & TNF \\
            NFKB1 &  & NFKB1 &  & NFKB1 &  \\
			\hline
		\end{tabular}
	\end{center}
\end{table}

The examination of these results follows an analogous procedure to the one introduced for target controllability. For each network and centrality measure, as well as the network associated to the whole minimum dominating set, we ascertain a drug therapy including the top ranked drugs on terms of reached essential genes. The chosen drugs are documented in Tables \ref{table-drugs-m3ds-in-degree}, \ref{table-drugs-m3ds-closeness}, \ref{table-drugs-m3ds-harmonic}, \ref{table-drugs-m3ds-eccentricity}, \ref{table-drugs-m3ds-betweenness} and \ref{table-drugs-m3ds-eigenvector}. Furthermore, the drugs corresponding to the dominating set associated to all the essential genes in the original patient network are shown in Table \ref{table-drugs-m3ds-all}. The associated most effective standard drug therapy for each tumor sample and centrality measure are given in Tables \ref{table-drugs-m3ds-standardTherapy-in-degree}, \ref{table-drugs-m3ds-standardTherapy-closeness}, \ref{table-drugs-m3ds-standardTherapy-harmonic}, \ref{table-drugs-m3ds-standardTherapy-eccentricity}, \ref{table-drugs-m3ds-standardTherapy-betweenness}, \ref{table-drugs-m3ds-standardTherapy-eigenvector} and \ref{table-drugs-m3ds-standardTherapy-all}.

\begin{table}
	\caption{The proposed drug therapy and the number of essential genes it dominates, for the minimum dominating set corresponding to the top ranked essential genes based on their in-degree.}
	\label{table-drugs-m3ds-in-degree}
	\begin{center}
		\small 
		\begin{tabular}{|l|l|c|}
			\hline
			\textbf{Tumor Sample} & \textbf{Proposed therapy} & \textbf{Dominated EG} \\
			\hline
			MM-0028-Tumor & Thalidomide, Pomalidomide, Dexamethasone & 28 \\
			MM-0038-Tumor & Thalidomide, Pomalidomide, Dexamethasone & 27 \\
			MM-0191-Tumor & Thalidomide, Pomalidomide, Dexamethasone & 29 \\
			\hline
		\end{tabular}
	\end{center}
\end{table}

\begin{table}
    \caption{The most effective standard drug therapy and the number of essential genes it dominates, for the minimum dominating set corresponding to the top ranked essential genes based on their in-degree.}
	\label{table-drugs-m3ds-standardTherapy-in-degree}
	\begin{center}
		\small
        \begin{tabular}{|l|l|c|}
            \hline
			\textbf{Tumor Sample} & \textbf{Standard therapy} & \textbf{Dominated EG} \\
			\hline
				MM-0028-Tumor & Bortezomib, Thalidomide, Dexamethasone & 28 \\
				MM-0038-Tumor & Bortezomib, Thalidomide, Dexamethasone & 27 \\
				MM-0191-Tumor & Bortezomib, Thalidomide, Dexamethasone & 29 \\
            \hline
        \end{tabular}
	\end{center}
\end{table}

\begin{table}
	\caption{The proposed drug therapy and the number of essential genes it dominates, for the minimum dominating set corresponding to the top ranked essential genes based on their closeness.}
	\label{table-drugs-m3ds-closeness}
	\begin{center}
		\small 
		\begin{tabular}{|l|l|c|}
			\hline
			\textbf{Tumor Sample} & \textbf{Proposed therapy} & \textbf{Dominated EG} \\
			\hline
			MM-0028-Tumor & Thalidomide, Pomalidomide, Dexamethasone & 24 \\
			MM-0038-Tumor & Thalidomide, Pomalidomide, Dexamethasone & 26 \\
			MM-0191-Tumor & Thalidomide, Pomalidomide, Dexamethasone & 26 \\
			\hline
		\end{tabular}
	\end{center}
\end{table}

\begin{table}
    \caption{The most effective standard drug therapy and the number of essential genes it dominates, for the minimum dominating set corresponding to the top ranked essential genes based on their closeness.}
	\label{table-drugs-m3ds-standardTherapy-closeness}
	\begin{center}
		\small
        \begin{tabular}{|l|l|c|}
            \hline
			\textbf{Tumor Sample} & \textbf{Standard therapy} & \textbf{Dominated EG} \\
			\hline
				MM-0028-Tumor & Bortezomib, Thalidomide, Dexamethasone & 24 \\
				MM-0038-Tumor & Bortezomib, Thalidomide, Dexamethasone & 26 \\
				MM-0191-Tumor & Bortezomib, Thalidomide, Dexamethasone & 26 \\
            \hline
        \end{tabular}
	\end{center}
\end{table}

\begin{table}
	\caption{The proposed drug therapy and the number of essential genes it dominates, for the minimum dominating set corresponding to the top ranked essential genes based on their harmonic centrality.}
	\label{table-drugs-m3ds-harmonic}
	\begin{center}
		\small 
		\begin{tabular}{|l|l|c|}
			\hline
			\textbf{Tumor Sample} & \textbf{Proposed therapy} & \textbf{Dominated EG} \\
			\hline
			MM-0028-Tumor & Thalidomide, Pomalidomide, Dexamethasone & 21 \\
			MM-0038-Tumor & Thalidomide, Pomalidomide, Dexamethasone & 26 \\
			MM-0191-Tumor & Thalidomide, Pomalidomide, Dexamethasone & 24 \\
			\hline
		\end{tabular}
	\end{center}
\end{table}

\begin{table}
    \caption{The most effective standard drug therapy and the number of essential genes it dominates, for the minimum dominating set corresponding to the top ranked essential genes based on their harmonic centrality.}
	\label{table-drugs-m3ds-standardTherapy-harmonic}
	\begin{center}
		\small
        \begin{tabular}{|l|l|c|}
            \hline
			\textbf{Tumor Sample} & \textbf{Standard therapy} & \textbf{Dominated EG} \\
			\hline
				MM-0028-Tumor & Bortezomib, Thalidomide, Dexamethasone & 21 \\
				MM-0038-Tumor & Bortezomib, Thalidomide, Dexamethasone & 26 \\
				MM-0191-Tumor & Bortezomib, Thalidomide, Dexamethasone & 24 \\
            \hline
        \end{tabular}
	\end{center}
\end{table}

\begin{table}
	\caption{The proposed drug therapy and the number of essential genes it dominates, for the minimum dominating set corresponding to the top ranked essential genes based on their eccentricity.}
	\label{table-drugs-m3ds-eccentricity}
	\begin{center}
		\small 
		\begin{tabular}{|l|l|c|}
			\hline
			\textbf{Tumor Sample} & \textbf{Proposed therapy} & \textbf{Dominated EG} \\
			\hline
			MM-0028-Tumor & Thalidomide, Pomalidomide, Dexamethasone & 26 \\
			MM-0038-Tumor & Thalidomide, Pomalidomide, Dexamethasone & 26 \\
			MM-0191-Tumor & Thalidomide, Pomalidomide, Dexamethasone & 26 \\
			\hline
		\end{tabular}
	\end{center}
\end{table}

\begin{table}
    \caption{The most effective standard drug therapy and the number of essential genes it dominates, for the minimum dominating set corresponding to the top ranked essential genes based on their eccentricity.}
	\label{table-drugs-m3ds-standardTherapy-eccentricity}
	\begin{center}
		\small
        \begin{tabular}{|l|l|c|}
            \hline
			\textbf{Tumor Sample} & \textbf{Standard therapy} & \textbf{Dominated EG} \\
			\hline
				MM-0028-Tumor & Bortezomib, Thalidomide, Dexamethasone & 26 \\
				MM-0038-Tumor & Bortezomib, Thalidomide, Dexamethasone & 26 \\
				MM-0191-Tumor & Bortezomib, Thalidomide, Dexamethasone & 26 \\
            \hline
        \end{tabular}
	\end{center}
\end{table}

\begin{table}
	\caption{The proposed drug therapy and the number of essential genes it dominates, for the minimum dominating set corresponding to the top ranked essential genes based on their betweenness.}
	\label{table-drugs-m3ds-betweenness}
	\begin{center}
		\small 
		\begin{tabular}{|l|l|c|}
			\hline
			\textbf{Tumor Sample} & \textbf{Proposed therapy} & \textbf{Dominated EG} \\
			\hline
			MM-0028-Tumor & Thalidomide, Pomalidomide, Dexamethasone & 27 \\
			MM-0038-Tumor & Thalidomide, Pomalidomide, Dexamethasone & 27 \\
			MM-0191-Tumor & Thalidomide, Pomalidomide, Dexamethasone & 27 \\
			\hline
		\end{tabular}
	\end{center}
\end{table}

\begin{table}
    \caption{The most effective standard drug therapy and the number of essential genes it dominates, for the minimum dominating set corresponding to the top ranked essential genes based on their betweenness.}
	\label{table-drugs-m3ds-standardTherapy-betweenness}
	\begin{center}
		\small
        \begin{tabular}{|l|l|c|}
            \hline
			\textbf{Tumor Sample} & \textbf{Standard therapy} & \textbf{Dominated EG} \\
			\hline
				MM-0028-Tumor & Bortezomib, Thalidomide, Dexamethasone & 27 \\
				MM-0038-Tumor & Bortezomib, Thalidomide, Dexamethasone & 27 \\
				MM-0191-Tumor & Bortezomib, Thalidomide, Dexamethasone & 27 \\
            \hline
        \end{tabular}
	\end{center}
\end{table}

\begin{table}
	\caption{The proposed drug therapy and the number of essential genes it dominates, for the minimum dominating set corresponding to the top ranked essential genes based on their eigenvector centrality.}
	\label{table-drugs-m3ds-eigenvector}
	\begin{center}
		\small 
		\begin{tabular}{|l|l|c|}
			\hline
			\textbf{Tumor Sample} & \textbf{Proposed therapy} & \textbf{Dominated EG} \\
			\hline
			MM-0028-Tumor & Thalidomide, Pomalidomide, Dexamethasone & 20 \\
			MM-0038-Tumor & Thalidomide, Pomalidomide, Dexamethasone & 21 \\
			MM-0191-Tumor & Thalidomide, Pomalidomide, Dexamethasone & 24 \\
			\hline
		\end{tabular}
	\end{center}
\end{table}

\begin{table}
    \caption{The most effective standard drug therapy and the number of essential genes it dominates, for the minimum dominating set corresponding to the top ranked essential genes based on their eigenvector centrality.}
	\label{table-drugs-m3ds-standardTherapy-eigenvector}
	\begin{center}
		\small
        \begin{tabular}{|l|l|c|}
            \hline
			\textbf{Tumor Sample} & \textbf{Standard therapy} & \textbf{Dominated EG} \\
			\hline
				MM-0028-Tumor & Bortezomib, Thalidomide, Dexamethasone & 20 \\
				MM-0038-Tumor & Bortezomib, Thalidomide, Dexamethasone & 21 \\
				MM-0191-Tumor & Bortezomib, Thalidomide, Dexamethasone & 24 \\
            \hline
        \end{tabular}
	\end{center}
\end{table}

\begin{table}
	\caption{The proposed drug therapy and the number of essential genes it dominates, for the minimum dominating set corresponding to the essential genes in the patient network.}
	\label{table-drugs-m3ds-all}
	\begin{center}
		\small 
		\begin{tabular}{|l|l|c|}
			\hline
			\textbf{Tumor Sample} & \textbf{Proposed therapy} & \textbf{Dominated EG} \\
			\hline
			MM-0028-Tumor & Thalidomide, Pomalidomide, Dexamethasone & 31 \\
			MM-0038-Tumor & Thalidomide, Pomalidomide, Dexamethasone & 31 \\
			MM-0191-Tumor & Thalidomide, Pomalidomide, Dexamethasone & 31 \\
			\hline
		\end{tabular}
	\end{center}
\end{table}

\begin{table}
    \caption{The most effective standard drug therapy and the number of essential genes it dominates, for the minimum dominating set corresponding to the essential genes in the patient network.}
	\label{table-drugs-m3ds-standardTherapy-all}
	\begin{center}
		\small
        \begin{tabular}{|l|l|c|}
            \hline
			\textbf{Tumor Sample} & \textbf{Standard therapy} & \textbf{Dominated EG} \\
			\hline
				MM-0028-Tumor & Bortezomib, Thalidomide, Dexamethasone & 31 \\
				MM-0038-Tumor & Bortezomib, Thalidomide, Dexamethasone & 31 \\
				MM-0191-Tumor & Bortezomib, Thalidomide, Dexamethasone & 31 \\
            \hline
        \end{tabular}
	\end{center}
\end{table}

From the point of view of centrality analysis, the results are consistent over all the networks and centrality measures with the prescription of the 3-drug combination of Thalidomide, Pomalidomide and Dexamethasone. The parallel outcomes yielded by this approach and the one focused on target controllability support our hypothesis of an interrelationship between the network topological properties and the genes influencing the disease. On these grounds, we propose hub genes ascertained by the centrality analysis of a disease network to be considered as basis for the discovery of new essential genes and drug targets.

\section{Conclusion}
\label{section-conclusion}

Network medicine is an exciting and promising field of research, with a high potential for personalized approaches. It brings together approaches in graph theory, network science, systems biology, bioinformatics and medicine, opening the door to detailed patient- and disease-specific insights. We discussed in this survey several basic methods of network modeling and their potential applicability in personalized medicine. We also demonstrated their potential on three multiple myeloma patient datasets. We showed how various methods (topological analysis, systems controllability) can be combined to predict optimal and personalized drug combinations therapies. In some cases, they differ from the initial standard therapy lines routinely offered to multiple myeloma patients, and resemble in part the options becoming available later in the disease progression. More studies (especially longitudinal studies) are needed to explore the full potential of these methods and convincingly demonstrate their applicability in the clinical practice. 

The results we obtained on the three datasets differ slightly from method to method. This is not surprising, as each method identifies different nodes and paths in the graphs that are of interest from various computational points of view. Which ones work best in practice should be explored with other research instruments, and it is likely the results may differ from case to case. 

The results in network medicine are critically dependent on the quality of the patient networks being applied to. The patient data that can be included in such networks can be quite diverse, including genetic mutations, copy number variations, differential gene expression,  co-morbidities, and concurrent treatments. All these data sources contribute to the set of nodes in the network. The interactions included in the network typically come from various interaction databases (some of which we discussed in this survey). The data going into these databases is very diverse: some of it is experimental (but not always on human samples), some is inferred from other experiments, while others are deduced through various machine learning methods. The importance of curating these datasets or at least choosing carefully which to include in the analyses cannot be underestimated. 

\bibliographystyle{plain}
\bibliography{bibliography}

\end{document}